\newcommand*\samethanks[1][\value{footnote}]{\footnotemark[#1]}
\let\svfootnoterule\footnoterule
\renewcommand\footnoterule{\vfill\svfootnoterule}
\title{Space Lower Bounds for Itemset Frequency Sketches\thanks{This paper supersedes an earlier manuscript of the same title by the first three authors.}}
\author{Edo Liberty\thanks{Yahoo Labs}
\and
Michael Mitzenmacher\thanks{Harvard University, School of Engineering and Applied Sciences.}
%Supported in part by NSF grants CCF-1320231, CNS-1228598, IIS-0964473, and CCF-0915922.
%Part of this work was done while visiting Microsoft Research, New England.}
\and
Justin Thaler\samethanks[2]% Parts of this work were performed while the author was a Research Fellow at the Simons Institute for the Theory of Computing, UC Berkeley. Supported by a Research Fellowship from the Simons Institute for the Theory of Computing.}
\and
Jonathan Ullman\thanks{Northeastern University, College of Computer and Information Sciences.}%  Supported by a Junior Fellowship from the Simons Society of Fellows.}
}
\date{}
\newtheorem{theorem}{Theorem}
\newtheorem{lemma}[theorem]{Lemma}
\newtheorem{definition}[theorem]{Definition}
\newtheorem{fact}[theorem]{Fact}
\newcommand{\Enc}{\mathrm{Enc}}
\newcommand{\releasedb}{\textsc{release--db}}
\newcommand{\releaseanswers}{\textsc{release--answers}}
\newcommand{\subsample}{\textsc{subsample}}
\newcommand{\database}{\mathcal{D}}
\newcommand{\eps}{\epsilon}
\newcommand{\rows}{n}
\newcommand{\cols}{d}
\newcommand{\poly}{\operatorname{poly}}
\newcommand{\bits}{\{0,1\}}
\newcommand{\itemset}{T}
\newcommand{\sumalg}{\mathcal{S}}
\newcommand{\recalg}{\mathcal{Q}}
\newcommand{\alg}{\mathcal{A}}
\newcommand{\eat}[1]{}
\newcommand{\IFE}{For-All-Itemset-Frequency-Estimator}
\newcommand{\IFI}{For-All-Itemset-Frequency-Indicator}
\newcommand{\SIFE}{For-Each-Itemset-Frequency-Estimator}
\newcommand{\SIFI}{For-Each-Itemset-Frequency-Indicator}
\newcommand{\bfJ}{\mathbf{J}}
\begin{document}
\maketitle
%\vspace{-5mm}
\begin{abstract}
Given a database, computing the fraction of rows that contain a query
itemset or determining whether this fraction is above some threshold
are fundamental operations in data mining.  
 A uniform sample of rows is a good
sketch of the database in the sense that all sufficiently frequent
itemsets and their approximate frequencies are recoverable from the
sample, and the sketch size is independent of the number of rows in
the original database. For many seemingly similar problems there are
better sketching algorithms than uniform sampling.  In this 
paper we show that for itemset frequency sketching this is not the case.
That is, we prove that there exist classes of databases for which uniform
sampling is a space optimal sketch for approximate
itemset frequency analysis, up to constant or iterated-logarithmic factors.
\end{abstract}

%\vspace{-4mm}

%\newpage

\section{Introduction}
%\vspace{-2mm}
Identifying frequent itemsets is one of the most basic and well-studied problems in data mining. 
Formally, we are given a binary database $\database \in \left(\bits^{\cols}\right)^{\rows}$ consisting of $\rows$ rows and $\cols$ columns, or attributes.\footnote{Throughout, we use the terms \emph{attributes} and \emph{items} interchangeably. While attributes may be non-binary in many applications, any attribute with $m$ possible values can be decomposed into $2 \lceil \log m \rceil$ binary attributes, using two binary attributes to mark whether the value
is 0 or 1 in the $i$th bit.  We therefore focus on the binary case.}
An \emph{itemset} $\itemset \subseteq [\cols]$ is a subset of the attributes, and the \emph{frequency} $f_{\itemset}$ of $\itemset$ is the fraction of rows of $\database$ that have a 1 in all columns in $\itemset$.
%Viewing the rows of  $\database$ as sets, the frequency of $\itemset$ is the fraction of rows that contain $\itemset$.

Computing itemset frequencies is a central primitive that can be used to solve the following problems (and countless others): given a large corpus of text
files, compute the number of documents containing a specific search
query; given user records, compute the fraction of users who belong to
a specific demographic; given event logs, compute sets of events that
are observed together; given shopping cart data, identify bundles of items that are
frequently bought together. 
%given a binary database, compute its mean vector and variance.

In many settings, an approximation of $f_T$, as opposed to an exact result,
suffices. Alternatively, in some settings it suffices to recover a
single bit indicating whether or not $f_\itemset \ge \eps$ for some
user defined threshold $\eps$; such frequent itemsets may require
additional study or processing.  It is easy to show that uniformly
sampling $\poly(d/\eps)$ rows from $\database$ and
computing the approximate frequencies on the sample
$\sumalg(\database)$ provides good approximations to $f_\itemset$ up
to additive error $\eps$.  Our main contribution is to
provide lower bounds establishing that uniform sampling is an essentially
optimal \emph{sketch}, in terms of the space/accuracy tradeoff, for many parameter
regimes.  Here, a sketch $\sumalg(\database)$ of a database
is a bit string that enables recovery of accurate approximations to itemset frequencies. 

Note that, unlike a row sample, in general a sketch is not
limited to containing a subset of the database rows. Our lower bounds
hold for any summary data structure and recovery algorithm that
constitute a valid sketch.

%\vspace{-3mm}
\subsection{Motivation}
%\vspace{-1mm}
\subsubsection{The Case Against Computing Frequent Itemsets Exactly}
%\vspace{-1mm}
\label{sec:caseagainst}
If the task is only to identify frequent itemsets ($f_T \ge \eps$ for some $\eps$), it is natural to ask whether we can compute all $\eps$-frequent itemsets and store only those.
Assuming that only a small fraction of itemsets are $\eps$-frequent, 
this will result in significant space saving relative to na\"ive solutions.
The extensive literature on exact mining of frequent itemsets 
dates back to work of Agrawal et al. \cite{agrawal1993}, whose motivation
stemmed from the field of market basket analysis. As the search space
for frequent itemsets is exponentially large (i.e., size $2^{\cols}$), substantial effort was devoted
to developing algorithms that rapidly prune the search space and minimize 
the number of scans through the database. %--- see \cite{itemsetsurvey} for a partial survey of this extensive literature. %These algorithms typically exploit the 
%anti-monotonicity property: if $\itemset' \subseteq \itemset$, then $f_{\itemset} \leq f_{\itemset'}$.
%A full survey of this literature is beyond the scope of this introduction --- see \cite{itemsetsurvey} for a partial survey. 
While the algorithms developed 
in this literature offer substantial concrete speedups over naive approaches to frequent
itemset mining, they may still take time $2^{\Omega(d)}$, simply because there may
be these many frequent itemsets.  For example, if there is a frequent itemset of cardinality $d/10$, each of its $2^{d/10}$ subsets is also frequent.
Motivated by this simple observation, there is now an extensive literature on condensed or non-redundant representations of exact frequent itemsets. 
Reporting only \emph{maximal} and \emph{closed} frequent itemsets often works well in practice, 
but it still requires exponential size in the worst case (see the survey \cite{condensedsurvey}).

Irrespective of space complexity, the above methods face computational challenges.
Yang \cite{Yang2004} determined that counting all frequent itemsets is \#P-complete, 
and a bottleneck for enumeration is that the number of frequent itemsets can be exponentially large.
Hamilton et al. \cite{Hamilton2006} provide further hardness results based on parametrized complexity.
Here we observe that finding even a single frequent itemset of approximately maximal size is NP-hard.
(The authors of \cite{Li05} noticed this connection as well but did not mention approximation-hardness.)

Consider the bipartite graph containing $n$ nodes (rows) on one side and $d$ nodes (attributes) on the other. An edge exists between the two sides if and only if the row contains the attribute with value $1$.
Assume there exists a frequent itemset of cardinality $\eps n$ and frequency $\eps$. 
This itemset induces a balanced complete bipartite subgraph with $\eps n$ nodes on each side. Likewise, any balanced complete bipartite subgraph with $\eps n$ 
nodes per side implies the existence of an itemset of cardinality $\eps n$ and frequency $\eps$. 
Finding the maximal balanced complete bipartite subgraph is NP-hard, and approximating it requires superpolynomial time assuming that SAT cannot be solved in subexponential time \cite{Feige04hardnessof}. 
Hence, finding an itemset of approximately maximal frequency requires superpolynomial time under the same assumption.

%\vspace{-3mm}
\subsubsection{The Case for Itemset Sketches}
%\vspace{-1mm}
Determining the smallest possible itemset sketches (as defined in \S~\ref{sec3}) is of interest in several data analysis settings.

%\vspace{-1mm}
\medskip
\noindent \textbf{Interactive Knowledge Discovery.} Knowledge discovery in databases is often an interactive process: an analyst poses a sequence of queries to the dataset, with later queries depending on
the answers to earlier ones \cite{mannila}.
For large databases, it may be inefficient or even infeasible to 
reread the entire dataset every time a query is posed. Instead, a user can 
keep around an itemset sketch only; this sketch will be much smaller than the original
database, while still providing fast and accurate answers to itemset
frequency queries.

%\vspace{-1mm}
\medskip
\noindent \textbf{Efficient Data Release.}  Itemset oracles capture a central problem in \emph{data release}. 
Here, a data curator (such as a government agency like the US Census Bureau)
wants to make a dataset publicly available. Due to their utility and ease of interpretation,
the data format of choice in these settings is typically \emph{marginal contingency tables} (marginal tables for short). For any itemset $\itemset \subseteq [\cols]$ with $|\itemset| = k$, 
the marginal table corresponding to $\itemset$ has $2^k$
entries, one for each possible setting of the attributes in $\itemset$; each entry
counts how many rows in the database are consistent with
the corresponding setting of the $k$ attributes.
Notice that marginal tables are
essentially just a list of itemset frequencies 
for $\database$.\footnote{More precisely,
itemset frequency queries are equivalent to \emph{monotone conjunction} queries on a database, while marginal tables are equivalent to general (non-monotone) conjunction queries.}

However, marginal tables can be 
extremely large (as any $k$-attribute marginal table has $2^k$ entries and there $\binom{d}{k}$ such tables),
and each released table may be downloaded by an enormous number of users. 
Rather than releasing marginal tables in their entirety, the data curator can instead 
choose to release an itemset summary. This summary can be much smaller than even a single $k$-attribute marginal table, while still permitting any user to obtain fast and accurate estimates for the frequency of any $k$-attribute marginal query. 

%\vspace{-1mm}
\medskip
\noindent \textbf{Mitigating Runtime Bottlenecks.}
While the use of itemset sketches cannot circumvent
the hardness results discussed in Section \ref{sec:caseagainst}, in many 
settings the empirical runtime bottleneck is the number of scans through
the database, rather than the total runtime of the algorithm. 
The use of itemset sketches eliminates the need for the user to repeatedly
scan or even keep a copy of the original database.  The user
can instead run a computationally intensive algorithm on the sketch
to solve (natural approximation variants) of the hard decision or search problems. 
Indeed, there has been considerable work in the data mining
community devoted to bounding the magnitude of errors that build up as a result
of using approximate itemset frequency information 
when performing more complicated data mining tasks, such as rule identification
\cite{mannila}.

%\vspace{-4mm}
\subsection{Other Prior Work}
%\vspace{-1mm}
The idea of producing condensed representations of approximate frequent itemsets
is not new. Most relevant to our work, an influential paper by Mannila and Toivonen defined the notion of an 
\emph{$\eps$-adequate representation} of any class of queries \cite{mannila}.
Our \IFE\ sketching task essentially asks
for an $\eps$-adequate representation for the class of all itemset frequency queries. 
Mannila and Toivonen analyzed the magnitude of errors that build up
when using $\eps$-adequate representations to 
perform more complicated data mining tasks, such as rule identification.
Subsequent work by Boulicaut et al. \cite{freesets} presented algorithms yielding
$\eps$-adequate representations for the class of all itemset queries, while Pei et al. \cite{peietal}
gave algorithms for approximating the frequency of all \emph{frequent} itemsets to error $\eps$.
Unlike the trivial algorithms that we describe in Section~\ref{naiveAlgorithms}, %(\releasedb, \releaseanswers, and \subsample), 
the algorithms presented in \cite{mannila, freesets, peietal} take exponential time in the worst case,
and do not come with worst-case guarantees on the size of the output.

Streaming algorithms for both exact and approximate variants of frequent itemset mining have also been extensively studied, in a line of work dating back to Manku and Motwani \cite{manku} --- see the survey \cite{streamingitemsetssurvey}. None of these works have been able to show
that these algorithms use less space than uniform random sampling of database rows, and our results justify why.
%Manku and Motwani streaming algorithms for approximate variants of frequent itemset mining have also been extensively studied --- see the survey \cite{streamingitemsetssurvey}. 
In particular, to our knowledge there has been no work establishing lower bounds on the space complexity of streaming algorithms for identifying 
approximate frequent itemsets that are better than the lower bounds that hold for the much simpler \emph{approximate frequent items} problem (a.k.a. the heavy hitters problem).
Note that the lower bounds that we establish in this work apply even to summaries computed by non-streaming algorithms.
%\vspace{-3mm}
\paragraph{Related Work by Price \cite{price}.} An earlier version of this manuscript by the first three authors  \cite{early} gave lower bounds on the size of sketches for frequent itemset mining that are quantitatively weaker than the bounds presented here \cite{early}.  In work subsequent to \cite{early}, and contemporaneous with the work presented here, Price \cite{price} discovered a short proof of an optimal lower bound for the \IFI\ sketching problem (defined in Section \ref{sectionDef} below) for itemsets $\itemset$ of size $|\itemset|=O(1)$. We compare our results to Price's in more detail in Section \ref{sec:overview}.

\eat{In particular, our lower bounds, which apply even to summaries computed by non-streaming algorithms, 
give (to our knowledge) the first proof that approximate frequent itemset computation (streaming or otherwise) requires strictly more resources than streaming algorithms for computing approximate frequent items.
}

 \eat{
Another relevant research direction involves sketching valuation functions \cite{Badanidiyuru2012SVF,BalcanH12},
as $g(T)  = 1-f_T$ is valid valuation function (that is, it is both both monotone and sub-additive).
Any additive $\eps$ approximation to $g$ yields a similar approximation for $f$. 
While according to \cite{Badanidiyuru2012SVF,BalcanH12} sketching general valuation functions efficiently is impossible up to an approximation factor of $O(n^{1/2})$
{\bf MM:  is this multiplicative?  We've gone last sentence from talking about additive to multiplicative, which seems odd.  What do we mean by impossible here -- impossible under some assumptions, or up to some small space, or generally?  Not clear},
itemset frequency functions could still obtain better approximation because they represent a restricted subclass of all valuation functions.
Indeed, subsampling already shows that better sketches for frequency functions are possible.
}

%\vspace{-2mm}
\subsection{Notation and Problem Statements} \label{sectionDef}
\label{sec3} 
%%\vspace{-2mm}
Throughout, $\database \in \left(\bits^{\cols}\right)^{\rows}$
will denote a binary database consisting of $\rows$ rows and $\cols$ columns, or attributes.
We denote
the set $\{1, \dots, \cols\}$ by $[\cols]$. 
An \emph{itemset} $\itemset \subseteq [\cols]$ is a subset of the attributes; abusing notation,
we also use $\itemset$
to refer to the \emph{indicator vector} in $\{0, 1\}^{\cols}$ whose
$i$th entry is 1 if and only if $i \in \itemset$. 
We refer to any itemset $\itemset$ with $|\itemset| = k$ as a \emph{$k$-itemset}.
The $i$th row of $\database$ will be denoted by $\database(i)$, and
the $j$th entry of the $i$th row will be denoted by $\database(i, j)$.
% and any itemset $\itemset$ with $|\itemset| = k$ as a \emph{$k^{(=)}$-itemset}.
We say that a row \emph{contains} an itemset $\itemset$ if the row has a 1 in all columns
in $\itemset$.
The \emph{frequency} $f_{\itemset}(\database)$  of $\itemset$ is the fraction of rows of $\database$ that contain $\itemset$. 
Alternatively, $f_{\itemset}(\database) = \frac{1}{n} \sum_{i=1}^{n} \mathbb{I}_{\{T \subseteq \database(i)\}}$.
We use the simplified notation $f_{\itemset}$ instead of $f_{\itemset}(\database)$ when the meaning is clear.
Note that we may view a row $\database(i)$ of $\database$ as a one-row database in its own right; hence, 
$f_{\itemset}(\database(i))$ equals $1$ if $\database(i)$ contains $\itemset$, and equals $0$ otherwise.
 
 We consider \emph{four} different sketching problems that each capture a natural notion of approximate 
 itemset frequency analysis. 
 %The first two problems (Definitions \ref{def:ifi} and \ref{def:ife}) require
 %sketches from which it is possible (with probability $1-\delta$) to simultaneously recover accurate 
 %frequency estimates for \emph{all} $k$-itemsets. The latter two problems (Definitions \ref{def:sifi} and \ref{def:sife})
 %are analogous, but with a weaker requirement: they only require sketches from which it is possible to obtain an accurate
 %estimate for any (single) $k$-itemset with probability $1-\delta$ (but it may be very unlikely
 %that one can recover accurate estimates for all $k$-itemsets from the sketch simultaneously). We refer to these latter
 %two variants as \emph{single-query} sketching problems.
All four sketching problems permit randomized sketching algorithms, and require that the sketching algorithm succeeds with high probability. %(probability at least $1-\delta$ for some parameter $\delta$), 
Success can be interpreted in two different ways: (1) with high probability, for all $k$-itemsets the answer is correct; or (2) for each $k$-itemset, with high probability the answer is correct (but it may be very unlikely that one can recover accurate estimates for all $k$-itemsets from the sketch simultaneously). These two different notions of success are often termed, respectively, ``for all'' and ``for each'' in the compressed sensing literature \cite{andoni}.
Our first two problem definitions (Definitions \ref{def:ifi} and \ref{def:ife}) correspond to the ``for all'' notion, while 
our latter two problem problem definitions (Definitions \ref{def:sifi} and \ref{def:sife}) correspond to the weaker ``for each'' notion.

 %\vspace{-2mm}
\begin{definition}[\IFI\ sketches]
\label{def:epskapprox2} \label{def:ifi}
%An $(\rows, \cols, k, \eps, \delta)$-\IFI\ sketch is a tuple $(\sumalg, \recalg)$. 
A \IFI\ sketch is a tuple $(\sumalg, \recalg)$. 
The first term $\sumalg$ is a randomized sketching algorithm.
It takes as input a database $\database \in \left(\bits^{\cols}\right)^{\rows}$, a precision $\eps$, an itemset size $k$, and a failure probability $\delta$.
It outputs a \emph{summary} $\sumalg(\database,k,\eps,\delta) \in \{0, 1\}^{s}$ where $s$ is the size of the sketch in bits.
The second term is a deterministic query procedure $\recalg: \{0, 1\}^{s} \times \{0, 1\}^{\cols} \rightarrow \{0,1\}$. 
It takes as input a summary $\sumalg$ and a $k$-itemset $\itemset$ and outputs a single bit indicating whether $\itemset$ is frequent in $\database$ or not. 
More precisely, for a triple of input parameters $(k, \eps, \delta)$, the following two conditions must hold with probability $1-\delta$ over the randomness of the sketching algorithm $\sumalg$, for every database $\database$:
%\vspace{-2mm}
\begin{equation} \label{eq:ifione}
\forall \; \mbox{$k$-itemsets} \; T \mbox{\;s.t.\;} f_T > \eps, \;\; \recalg(\sumalg(\database, k, \eps, \delta),T) = 1, \text{ and}
\end{equation}
\begin{equation}  \label{eq:ifitwo}
\forall \; \mbox{$k$-itemsets} \; T \mbox{\;s.t.\;} f_T < \eps/2, \;\; \recalg(\sumalg(\database, k, \eps, \delta),T) = 0.
\end{equation}
Note that if  $\eps/2 \leq f_T \leq \eps$ then either bit value can be returned.
\end{definition}
%\vspace{-3mm}
\begin{definition}[\IFE\ sketches]
\label{def:ife}
%An $(\rows, \cols, k, \eps, \delta)$-\IFE\ sketch is a tuple $(\sumalg, \recalg)$.
A \IFE\ sketch is a tuple $(\sumalg, \recalg)$.
Here $\sumalg$ is defined as above but $\recalg : \{0, 1\}^{s} \times \{0, 1\}^{\cols} \rightarrow [0,1]$ outputs an approximate frequency.
To be precise, the pair $(\sumalg, \recalg)$ is a valid \IFE\ sketch for a triple of input parameters $(k, \eps, \delta)$ if for every database $\database$:
%\vspace{-3mm}
\begin{equation} \label{eq:ifedef}
\Pr [\forall \; \mbox{$k$-itemsets} \; T, \;\; | \recalg(\sumalg(\database, k, \eps, \delta),T) - f_T | \le \eps] \ge 1-\delta.
\end{equation}
\end{definition}
%\vspace{-3mm}
\begin{definition}[\SIFI\ sketches]
 \label{def:sifi}
%An $(\rows, \cols, k, \eps, \delta)$-\IFI\ sketch is a tuple $(\sumalg, \recalg)$. 
A \SIFI\ sketch is identical to a \IFI\ sketch, except that Equations \eqref{eq:ifione} and \eqref{eq:ifitwo} are replaced
with the requirement that for every database $\database$ and any (single) $k$-itemset $T$:
%\vspace{-2mm}
\[
 \text{If } f_T > \eps, \text{ then } \recalg(\sumalg(\database, k, \eps, \delta),T) = 1 \text{ with probability at least } 1-\delta, \text{ and }
\]
%\vspace{-6mm}
\[
 \!\!\!\!\!\!\!\!\text{If } f_T < \eps/2, \text{ then } \recalg(\sumalg(\database, k, \eps, \delta),T) = 0 \text{ with probability at least } 1-\delta.
\]
\end{definition}
%\vspace{-3mm}
\begin{definition}[\SIFE\ sketches]
 \label{def:sife}
%An $(\rows, \cols, k, \eps, \delta)$-\IFI\ sketch is a tuple $(\sumalg, \recalg)$. 
A \SIFE\ sketch is identical to a \IFE\ sketch, except that Equation \eqref{eq:ifedef} is replaced
with the requirement that for every database $\database$ and any (single) $k$-itemset $T$:
%\vspace{-2mm}
$\Pr [| \recalg(\sumalg(\database, k, \eps, \delta),T) - f_T | \le \eps] \ge 1-\delta.$
\end{definition}
%\vspace{-2mm}
\begin{definition} 
The space complexity of a sketch, denoted by $|\sumalg(n, d, k, \eps, \delta)|$, is the maximum sketch size generated by $\sumalg$ for any database with $n$ rows and $d$ columns. 
That is, $|\sumalg(n, d, k, \eps, \delta)| = \max_{\database \in \left(\{0, 1\}^d\right)^{n}} |\sumalg(\database,k,\eps,\delta)|$. 
\end{definition}
%\vspace{-2mm}
For brevity, we typically omit the parameters $(n, d, k, \eps, \delta)$ when the meaning is clear, and simply write $|\sumalg|$ to denote the space complexity of a sketch.

%%\vspace{-2mm}
\subsection{Techniques}
%%\vspace{-2mm}
At a high level our lower bounds are proven via the standard approach of encoding arguments.  That is, to prove that $\Omega(s)$ bit sketches are necessary to solve one of the problems above, we construct a distribution over $s$-bit databases $\database \in (\bits^d)^{s/d}$ such that 1) the entropy of the distribution is $\Omega(s)$ and 2) any itemset frequency sketch can be used to reconstruct the database.  Thus, the sketch must have size $\Omega(s)$ bits in the worst case.

For the simplest case of $k = 1$ and $\eps = 1/3$, it is easy to show that $\Omega(d)$ bits are necessary to solve every version of the itemset sketching problem, since any non-trivial estimation of the $1$-itemset frequencies is sufficient to exactly encode a database consisting of a single row.  %To solve all of the above sketching problems, even for $k = 1$ and $\eps = 1/3$, since the $1$-itemset frequencies are sufficient to encode a database consisting of a single row exactly.  
In order to prove larger space lower bounds for larger values of $k$ and smaller values of $\eps$, we must show that if we are given either $k$-itemset queries for $k = \omega(1)$, or sketches with accuracy $\eps = o(1)$, then we can encode databases consisting of more than one row.

To do so, we combine some new arguments with information-theoretic techniques that were previously developed to solve problems in \emph{differential privacy}~\cite{difpriv}.  The problem of constructing differentially private sketches for frequent itemset queries has been studied intensely in recent years (see e.g.~\cite{difpriv3, difpriv4, de, difpriv5, difpriv1, difpriv2, buv}; in these works frequent itemset queries are called monotone conjunction queries and sometimes contingency tables).  It turns out that the techniques in these works provide exactly the information-theoretic tools that we need to devise our encoding arguments.  Although the connection between the two problems appears coincidental, we believe that %being aware that certain information-theoretic tools are useful in both areas may be useful in the future.
information-theoretic tools from the privacy literature may find future applications outside of privacy.\footnote{Indeed, our use of techniques from the privacy literature may not be purely coincidental, as there is a formal (though quantitatively loose) connection between the problems of developing private and non-private sketches for itemset frequency queries. Suppose that there is a way to take any dataset $D \in (\bits^d)^n$ and create a sketch $\mathcal{S}$ of $s$ bits that encodes the answer to every itemset query $f_T(D)$ to within additive error $\pm \eps$.  Then we can obtain a differentially private sketch that encodes the answers to every query $f_T(D)$ to within additive error $\eps + O(s/n)$ as follows.  Output a sketch $S$ with probability proportional to $\exp(-n \cdot \max_{T} | f_T(D) - \mathcal{Q}(S, T) |)$.  Our claims that the output $S$ will be differentially private and that with high probability the chosen sketch will have additive error $\eps + O(s/n)$ can both be proven by an elementary analysis, or by using the fact that this algorithm is a special case of the exponential mechanism~\cite{mt} and thus standard results can be applied.  Now, suppose that we had a lower bound saying that any differentially private algorithm must incur error $t/n$ (lower bounds in differential privacy are often of the form $t / n$ for some $t$ independent of $n$).  Then we would immediately obtain a lower bound saying that any $\eps$-accurate sketch for itemsets requires $s = \Omega(t - \eps n)$ bits.  Thus, accuracy lower bounds in differential privacy \emph{immediately} imply space lower bounds for the associated sketching problem.  However, this generic connection rarely leads to bounds that are quantitatively tight.}  

We now sketch roughly how our encoding arguments work.  Suppose we have already proven a lower bound of $\Omega(s)$ bits for any sufficiently accurate $k$-itemset sketch via an encoding argument.  For this informal discussion, the precise definition of accuracy will not be important.  We then have a high-entropy distribution $\mathcal{D}$ on databases $D \in (\bits^d)^{s/d}$ such that a sufficiently accurate sketch for $k$-itemset frequency queries on $D$ must encode $D$.  For example, as we discussed above, we can trivially start with a lower bound of $d$ bits for any non-trivial approximation to the $1$-itemset queries, although sometimes we will need to start with stronger lower bounds.  We then obtain a lower bound of $\Omega(t \cdot s)$ for any sufficiently accurate sketch for $k'$-itemset frequency queries using the following amplification technique, inspired by the technique in~\cite{buv} for amplifying lower bounds in differential privacy.  Roughly, the technique allows us to construct a new distribution $\mathcal{D'}$ on databases $D' \in (\bits^{2d})^{ts/d}$ such that any sufficiently accurate sketch for $k'$-itemset frequency queries can be used to reconstruct an accurate $k$-itemset sketch on each of $t$ different draws from the distribution $\mathcal{D}$ over $(\bits^d)^{s/d}$.  Since we have started by assuming that such a sketch for $k$-itemset frequency queries requires $\Omega(s)$ bits, we conclude that any sufficiently accurate sketch of for $k'$-itemset frequency queries requires $\Omega(t \cdot s)$ bits of space.  The above outline is overly simplified, and the resulting reconstruction will only be approximate.  Thus, we need to make sure that the approximation is good enough for our arguments to go through, especially in some of our bounds that require applying the above amplification arguments twice, where the approximation becomes even weaker. 

\eat{
Unfortunately, there are two challenges with using the above observation in a generic way to prove lower bounds on sketch size.  First, this generic reduction incurs a quantitative loss in parameters, and does not lead to optimal space lower bounds.  Second, the appropriate error lower bounds for differentially private itemset sketches do not appear in the literature for many of the parameter regimes of interest here.  

Fortunately, most lower bound arguments that appear in the privacy literature have the following special form: 
they show that, given accurate answers to a particular class of queries, one can reconstruct essentially the entire database (and therefore any
mechanism giving accurate answers to this class of queries cannot be private). Such arguments are fundamentally information-theoretic in nature, 
and it is perhaps unsurprising that they can be used to prove optimal space lower bounds even in \emph{non-private} settings. 

Nonetheless, the arguments appearing in the extant literature do not directly yield space lower bounds for \IFI\ sketches, and do not 
yield tight space lower bounds for \IFE\ sketches for most of the parameter
regimes we are interested in. The technical contribution of our work is to extend and synthesize these techniques to obtain optimal or essentially optimal space lower bounds for both \IFI\ and \IFE\ sketches, in a much wider range of parameter regimes than what has been considered in prior work.
}

%\vspace{-2mm}
\section{Na\"ive upper bounds}
%\vspace{-1mm}
\label{naiveAlgorithms}
In the following we describe three trivial sketching algorithms.

\begin{definition}[\releasedb] 
This algorithm simply releases the database verbatim. In other words, the function $\sumalg$ is the identity and $\recalg$ is a standard database query.
\end{definition}
The space complexity of \releasedb\ is clearly $|\sumalg| = O(nd)$ and it produces exact estimates for both \IFE\ and \IFI\ sketches and their For-Each analogs.

\begin{definition}[\releaseanswers] 
This algorithm precomputes and stores the results to all possible queries.
\end{definition}
Since there are ${d \choose k}$ possible $k$-itemset queries, the space complexity of \releaseanswers\ is $|\sumalg| = O({d \choose k})$ for \IFI\ sketches 
and their For-Each analogs, and $|\sumalg| = O\left( {d \choose k}\log(1/\eps)\right )$ for  \IFE\ sketches and their For-Each analogs.
The extra $\log(1/\eps)$ factor is needed to represent frequencies as floating point numbers up to precision $\eps$.

\begin{definition}[\subsample] 
This algorithm samples rows uniformly at random with replacement from the database. The samples constitute the sketch $\sumalg(\database,k,\eps,\delta)$. 
The recovery algorithm $\recalg(\sumalg(\database), T)$ returns the frequency of $T$ in the sampled rows via a standard database query.
\end{definition}
\begin{lemma}[Subsampling] \label{lemma:subsample}
\subsample\ outputs
\begin{itemize}
\item a valid \IFI\ sketch of space complexity $|\sumalg| \!=\! O\!\left (\eps^{-1} \cols \log\! \left({{\cols \choose k}/\delta} \!\right)\!\right)$,
\item a valid \IFE\ sketch of space complexity $|\sumalg| = O\left ( \eps^{-2} \cols \log \left ({{\cols \choose k}/\delta} \right) \right)$,
\item a valid \SIFI\ sketch of space complexity $|\sumalg| = O\left ( \eps^{-1} \cols \log(1/\delta) \right)$, and 
\item a valid \SIFE\ sketch of space complexity $|\sumalg| = O\left ( \eps^{-2} \cols \log(1/\delta) \right)$.
\end{itemize}
\end{lemma}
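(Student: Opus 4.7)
The plan is to analyze the four cases uniformly by first observing that if we take $m$ samples (with replacement), the sketch stores $m$ rows verbatim and hence has size $O(md)$ bits; only the choice of $m$ differs across the four settings. For a fixed itemset $T$, write $\hat{f}_T = \frac{1}{m}\sum_{i=1}^{m} \mathbb{I}\{T \subseteq \database(S_i)\}$, where $S_1,\ldots,S_m$ are the sampled row indices. These indicators are i.i.d.\ Bernoulli with mean $f_T$, so $\hat{f}_T$ is an unbiased estimator of $f_T$, and everything reduces to concentration of a binomial average around its mean.

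For the two frequency-estimator sketches, I would invoke Hoeffding's inequality: $\Pr[|\hat{f}_T - f_T| > \eps] \le 2\exp(-2m\eps^2)$. For \SIFE\ this single bound already gives the desired accuracy with $m = O(\eps^{-2}\log(1/\delta))$. For \IFE\ I take a union bound over the $\binom{d}{k}$ many $k$-itemsets, which inflates the required number of samples to $m = O(\eps^{-2}\log(\binom{d}{k}/\delta))$. In both cases the recovery algorithm simply returns $\hat{f}_T$. Multiplying by $d$ gives the stated sketch sizes.

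For the two indicator sketches, I would set the decision threshold at $3\eps/4$: output $1$ iff $\hat{f}_T \ge 3\eps/4$. The point is that we only need $\hat{f}_T$ to land on the correct side of $3\eps/4$ whenever $f_T$ is either above $\eps$ or below $\eps/2$, so a \emph{constant-factor multiplicative} error at scale $\eps$ suffices, and the sharper multiplicative Chernoff bound applies. Specifically, if $f_T > \eps$ then the bad event $\hat{f}_T \le 3\eps/4$ forces a relative deviation of at least $1/4$ below the mean, giving failure probability $\exp(-\Omega(\eps m))$; if $f_T < \eps/2$, the bad event forces a relative upward deviation of at least $1/2$, again giving failure probability $\exp(-\Omega(\eps m))$. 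Hence $m = O(\eps^{-1}\log(1/\delta))$ suffices for \SIFI, and $m = O(\eps^{-1}\log(\binom{d}{k}/\delta))$ suffices for \IFI\ after the usual union bound.

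The only real subtlety — and what I expect to be the main thing to get right — is the gap between the $\eps^{-1}$ rate for the indicator variants and the $\eps^{-2}$ rate for the estimator variants. A naive use of Hoeffding's inequality for the indicator case would give $\eps^{-2}$, which is loose; the improvement relies on choosing an intermediate threshold and invoking the multiplicative Chernoff bound so that the exponent scales with $\eps m$ rather than $\eps^2 m$. Everything else is routine bookkeeping: split into \textbf{For-All} vs \textbf{For-Each} to decide whether to pay a $\log\binom{d}{k}$ union-bound factor, and multiply the resulting sample size by $d$ bits per row.
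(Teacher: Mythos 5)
Your proposal is correct and follows essentially the same route as the paper: additive Chernoff/Hoeffding for the two estimator variants, multiplicative Chernoff for the two indicator variants (this is exactly the source of the $\eps^{-1}$ vs.\ $\eps^{-2}$ gap, as you note), and a union bound over the $\binom{d}{k}$ itemsets to pass from For-Each to For-All. Your choice to fix an explicit decision threshold at $3\eps/4$ and argue each side separately is a slightly cleaner presentation than the paper's, which merely shows $\hat{f}_T \in [f_T/2, 2f_T]$ with high probability and leaves the threshold implicit, but the underlying bound and sample-size calculation are the same.
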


\begin{proof}
Each row sample requires $d$ bits to represent. Thus to prove each of the above statements it suffices to bound the number of row samples required to ensure the accuracy goal.  We can do so using standard probabilistic inequalities:  Chernoff bounds for sums of independent random variables and union bounds. We will need the following standard forms of the Chernoff bound.

\begin{lemma} \label{chernoff1}
Suppose $X_1,\dots,X_s$ are independent random $\{0,1\}$-valued random variables with expectation $p$, and let $\overline{X} = \frac{1}{s}\sum_{i=1}^{s} X_i$. Then for any $\eps < 2e-1$, $\mathbb{P}[\overline{X} \not\in [(1-\eps)p, (1+\eps)p] \leq 2\exp(-sp\eps^2/4)$.
\end{lemma}

\begin{lemma}
\label{chernoff2}
Suppose $X_1,\dots,X_s$ are independent random $\{0,1\}$-valued random variables with expectation $p$, and let $\overline{X} = \frac{1}{s}\sum_{i=1}^{s} X_i$. Then for any $\eps < 1$, $\mathbb{P}[\overline{X} \not\in [p - \eps, p + \eps] \leq 2\exp(-2s\eps^2)$.
\end{lemma}

\smallskip
\noindent \emph{\SIFI\ sketches:}
Fix a dataset $\database$ and an itemset $T$ and let $p = f_{T}(\database)$.  Consider drawing $s$ independent uniform random samples of rows $\database'(1),\dots,\database'(s)$ with replacement from $\database$.  Let $\database'$ be the database consisting of the $s$ row samples. For $i=1,\ldots,s$, define the random variable $X_i = 1$ if $T \subseteq D'(i)$
and 0 otherwise. Let $\overline{X} = f_T(\database') = \frac{1}{s} \sum_{i=1}^{s} X_i$.  Since the samples $\database'(i)$ are independent, the random variables $X_i$ are independent.  Moreover, for every $i$, $\mathbb{E}[X_i] = p$, and by linearity of expectation $\mathbb{E}[\overline{X}] = p$. Then by Lemma \ref{chernoff1}, we have that
%\vspace{-1mm}
$$
\mathbb{P}[f_T(\database') \not\in [p/2, 2p]] \leq 2\exp(-sp/16).
$$
The right hand side will be at most $\delta$ if $s \geq 16 \ln(2/\delta)/p$ for a sufficiently large constant $C$.  From this bound, we can deduce that, for \SIFI\ sketches, it suffices to take $s = O(\eps^{-1}\log(1/\delta))$ row samples to ensure that the accuracy requirement of Definition \ref{def:sifi} is satisfied.

\smallskip
\noindent \emph{\SIFE\ skeches:}
The setup is the same, except that we apply Lemma \ref{chernoff2} instead of Lemma \ref{chernoff1} to obtain: %\vspace{-1mm}
$$
\mathbb{P}[f_T(\database') \not\in [p - \eps, p + \eps]] \leq 2\exp(-2s\eps^2).
$$
The right hand side will be at most $\delta$ if $s \geq \eps^{-2}\ln(2/\delta)$.  Thus, for \SIFE\ sketches it suffices to take $s = O(\eps^{-2}\log(1/\delta))$ row samples to ensure that the accuracy requirement of Definition \ref{def:sife} is satisfied.

\smallskip
\noindent \emph{\IFI\ sketches:}
By the analysis above, we know that for any $\delta'> 0$ and any itemset $T$, $f_{T}(\database')$ is accurate with probability at least $1-\delta'$ if $s = O(\eps^{-1}\log(1/\delta'))$.  Thus, by a union bound %\vspace{-1mm}
$$
\mathbb{P}[\exists \text{ } T \subseteq [d], |T| = k,\; \textrm{$f_{T}(\database')$ is not accurate}]
\leq \binom{d}{k} \mathbb{P}[\textrm{$f_{T}(\database')$ is not accurate}] \leq \binom{d}{k} \delta'.
$$
Now, by setting $\delta' = \delta / \binom{d}{k}$, we can see that it suffices to take $s = O(\eps^{-1} \log(\binom{d}{k}/\delta))$ samples to ensure accuracy.

\smallskip
\noindent \emph{\IFE\ sketches:}
Here we apply the same union bound argument to our analysis of \SIFE\ sketches.  We can easily see that it suffices to take $s = O(\eps^{-2} \log(\binom{d}{k}/\delta))$.
\end{proof}

%\noindent \subsample\ produces a valid \IFI\ sketch of space complexity of $|\sumalg| \!=\! O\!\left (\eps^{-1} \cols \log\! \left({{\cols \choose k}/\delta} \!\right)\!\right)$, an \IFE\ sketch of space complexity $|\sumalg| = O\left ( \eps^{-2} \cols \log \left ({{\cols \choose k}/\delta} \right) \right)$, a \SIFI\ sketch of space complexity $|\sumalg| = O\left ( \eps^{-1} \cdot \log(1/\delta) \cdot \cols \right)$, and a \SIFE\ sketch of space complexity $|\sumalg| = O\left ( \eps^{-2} \cdot \log(1/\delta) \cdot \cols \right)$. To see this, note that the number of required bits is $d$ (to describe one database row) times a sufficient number of sampled rows, which is $O\left( \eps^{-1} \log \left ({{\cols \choose k}/\delta} \right )\right )$ for \IFI\ sketches, $O \left( \eps^{-2} \log \left ({{\cols \choose k}/\delta} \right) \right)$ for \IFE\ sketches, $O\left( \eps^{-1} \cdot \log\left(1/\delta\right) \right )$ for \SIFI\ sketches, and $O \left( \eps^{-2} \log \left (1/\delta \right) \right)$ for \SIFE\ sketches. This follows from a standard application of Chernoff bounds (multiplicative and additive, respectively), followed by a union bound of over all ${\cols \choose k}$ possible $k$-itemsets in the case of \IFI\ and \IFE\ sketches.
%\vspace{-1mm}
For any setting of the parameters $(\rows, \cols, k, \eps, \delta)$, the minimal space usage among the above three trivial algorithms constitutes our na\"ive upper bound for all four
sketching problems that we consider,
formalized in Theorem \ref{thm:upperbound} below.

%\vspace{-1mm}
\begin{theorem} \label{thm:upperbound}
(a) For any $(k, \eps, \delta)$, there is a randomized algorithm that, given any database $\database \in  (\bits^{\cols})^{\rows}$, outputs a \IFI\ sketch of size
$O\left(\min\left\{\rows \cols, {\cols \choose k}, \eps^{-1} \cols \log \left({\cols \choose k}/\delta\right)\right\}\right).$

%\vspace{-1mm}
\medskip \noindent (b) For any $(k, \eps, \delta)$, there is a randomized algorithm that, given any database $\database \in (\bits^{\cols})^{\rows}$, outputs a \IFE\ sketch of size $O\left(\min\left\{\rows \cols, {\cols \choose k}  \log(1/\eps), \eps^{-2} \cols \log\left({\cols \choose k}/\delta\right)\right\}\right).$

%\vspace{-1mm}
\medskip \noindent (c) For any $(k, \eps, \delta)$, there is a randomized algorithm that, given any database $\database \in (\bits^{\cols})^{\rows}$, outputs a \SIFI\ sketch of size
$O\left(\min\left\{\rows \cols, {\cols \choose k}, \eps^{-1} \cols \log(1/\delta) \right\}\right).$

%\vspace{-1mm}
\medskip \noindent (d) For any $(k, \eps, \delta)$, there is a randomized algorithm that, given any database $\database \in (\bits^{\cols})^{\rows}$, outputs a \SIFE\ sketch of size
$O\left(\min\left\{\rows \cols, {\cols \choose k}  \log(1/\eps), \eps^{-2} \cols \log(1/\delta) \right\}\right).$

\end{theorem}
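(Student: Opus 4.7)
The plan is to obtain Theorem~\ref{thm:upperbound} simply by taking, for each setting of the parameters $(n,d,k,\eps,\delta)$, the best of the three trivial sketching algorithms already defined in this section. Since the parameters $(n,d,k,\eps,\delta)$ are known to the sketching algorithm, it can compute the three candidate space complexities up front and run whichever algorithm minimizes space; the output is a valid sketch because each of the three algorithms is individually valid, and its size is no more than the minimum of the three space bounds.

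Concretely, I would proceed one clause at a time. For part~(a), \releasedb{} is an exact (hence \IFI-valid) sketch using $O(nd)$ bits, \releaseanswers{} stores one bit per $k$-itemset and so uses $O(\binom{d}{k})$ bits, and Lemma~\ref{lemma:subsample} states that \subsample{} yields a valid \IFI{} sketch of size $O(\eps^{-1} d \log(\binom{d}{k}/\delta))$. Taking the minimum of these three bounds gives the claimed $O\!\left(\min\left\{nd,\binom{d}{k},\eps^{-1}d\log(\binom{d}{k}/\delta)\right\}\right)$ bound. Part~(c) is identical except that Lemma~\ref{lemma:subsample} now contributes the weaker per-query bound $O(\eps^{-1} d \log(1/\delta))$, and \releaseanswers{} still suffices because a valid \IFI{} sketch is in particular a valid \SIFI{} sketch.

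For parts~(b) and~(d), the only change in the \releaseanswers{} term is the extra $\log(1/\eps)$ factor needed to store each frequency as a floating-point number of precision $\eps$, as noted right after the definition of \releaseanswers{}. The \subsample{} bounds come directly from the \IFE{} and \SIFE{} lines of Lemma~\ref{lemma:subsample}, namely $O(\eps^{-2} d \log(\binom{d}{k}/\delta))$ and $O(\eps^{-2} d \log(1/\delta))$ respectively. The \releasedb{} algorithm still contributes $O(nd)$ since an exact answer trivially satisfies the additive-$\eps$ estimation guarantees in Definitions~\ref{def:ife} and~\ref{def:sife}.

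There is no real obstacle here: the result is a packaging lemma whose content is entirely contained in the three definitions, in the remarks immediately following them, and in Lemma~\ref{lemma:subsample}. The only small point to double-check is the $\log(1/\eps)$ factor for \releaseanswers{} in the estimation variants, which arises because the sketch must actually write down each of the $\binom{d}{k}$ frequencies as a rational approximation of precision $\eps$, requiring $O(\log(1/\eps))$ bits per entry; for the indicator variants one bit per itemset is enough, so this factor disappears in (a) and (c).
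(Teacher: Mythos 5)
Your proposal is correct and is exactly the argument the paper uses: Theorem~\ref{thm:upperbound} is stated immediately after the three trivial algorithms and Lemma~\ref{lemma:subsample} as a pure packaging statement, with no separate proof beyond ``take the minimum of the three.'' Your identification of the sources of each term, including the $\log(1/\eps)$ factor for \releaseanswers{} in the estimation variants and its absence in the indicator variants, matches the paper.
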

%\vspace{-6mm}

\section{Lower Bounds}
%\vspace{-2mm}
In this section, we turn to proving lower bounds on the size of \IFI\ and \IFE\ sketches. 
Notice that the algorithms \releaseanswers\ and \subsample\ produce sketches
whose size is independent of $n$; hence, it is impossible to prove lower bounds that
grow with $n$. Consequently, we state our lower bounds in terms of the parameters
$(d, k, 1/\eps)$, with all of our lower bounds holding as long as $n$ is sufficiently
large relative to these three parameters. This parameter regime --- with $n$ a sufficiently large polynomial in $d$, $k$, and $1/\eps$ --- is consistent
with typical usage scenarios, where the number of rows in a database far
exceeds the number of attributes. In our formal theorem statements, we 
make explicit precisely how large a polynomial $n$ must be in terms of $d$, $k$, and $1/\eps$
for the lower bound to hold.

Each of our lower bounds also requires $d$, $k$, and $1/\eps$ to satisfy
certain mild technical relationships with each other --- for example,
Theorems \ref{thm:easylowerboundRevised} and \ref{thm:easylowerboundReviseds} require that $1/\eps < {d/2 \choose k-1}$.
In all cases,
the assumed technical relationship between the parameters is necessary or close to necessary
for the claimed lower bound to hold. For instance, the $\Omega(d/\eps)$ lower bound of Theorems \ref{thm:easylowerboundRevised} and  \ref{thm:easylowerboundReviseds} is false
for $1/\eps \gg {d/2 \choose k-1}$, as the algorithm \releaseanswers\ would output
a sketch of size $o(d/\eps)$ in this parameter regime.

%\vspace{-3mm}
\subsection{Overview of the Lower Bounds}
%\vspace{-2mm}
\label{sec:overview}
We now formally state all of the lower bounds that we prove,
and place our results in context. Throughout this section, 
we assume that the failure probability $\delta$ of the sketching algorithm 
is a constant less than one.
%
%\vspace{-3mm}
\paragraph{Resolving the complexity of \SIFI\ sketches.}
%We begin by proving lower bounds on the space complexity of Itemset-Frequency-Indicator sketches. 
The main result stated in this section is a relatively simple
$\Omega(d/\eps)$ lower bound for the \SIFI\ sketching problem (Theorem \ref{thm:easylowerboundReviseds} below), which is essentially optimal despite its simplicity.
For expository purposes, it will be convenient to first state an analogous lower bound for
the (harder) \IFI\ sketching problem. In Section \ref{firstproofs}, we prove the For-All lower bound first, and then explain how to modify the proof
to handle the For-Each case.

%We begin with a relatively simple bound for \IFI\ sketches.
%\vspace{-2mm}
%\begin{theorem}[Informal version of Theorem \ref{thm:easylowerboundRevised}]
%\label{thm:informal1}
%Assume $k \geq 2$ and $1/\eps < {d/2 \choose k-1}$. If $n$ is sufficiently large relative to $d$, $k$, and $1/\eps$, 
%then any sketch $\sumalg$ for the \IFI\ problem must satisfy $|\sumalg(n, d, k, \eps, \delta)| = \Omega(d/\eps)$.
%\end{theorem}
\begin{theorem}\label{thm:easylowerboundRevised}
Let $k \geq 2$. 
Suppose that $1/\eps \le {\cols/2 \choose k-1}$, and the failure probability $\delta < 1$ is constant.
Then for $n \geq 1/\eps$, the space complexity of any valid \IFI\ sketch is $|\sumalg(n, k, d, \eps, \delta)| = \Omega(d/\eps)$. 
\end{theorem}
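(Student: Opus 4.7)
The plan is to prove this via a direct encoding argument in the style outlined in the paper's techniques section. I will exhibit a large family of databases $\{D_{\mathbf{x}}\}$, indexed by bit strings $\mathbf{x}$ of length $\Omega(d/\eps)$, such that any valid \IFI\ sketch for $D_{\mathbf{x}}$ determines $\mathbf{x}$ with probability at least $1-\delta$ via $O(1)$ queries per bit. A Fano-type information-theoretic argument then forces $|\sumalg| = \Omega(d/\eps)$.

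The construction splits the $d$ columns into two halves: the first $d/2$ columns serve as \emph{identifier} columns, and the remaining $d/2$ as \emph{data} columns. Let $N := \lfloor 1/(2\eps)\rfloor$. The hypothesis $1/\eps \le \binom{d/2}{k-1}$ ensures $N \le \binom{d/2}{k-1}$, so we may fix an injection $\phi : [N] \to \binom{[d/2]}{k-1}$. For each $\mathbf{x} = (x_1,\ldots,x_N) \in (\bits^{d/2})^N$, I partition the $n$ rows of $D_{\mathbf{x}}$ into $N$ groups of size $m := \lfloor n/N \rfloor$ (for $n \ge 1/\eps$ we have $m \ge 2\eps n \ge 2$, and $m/n > \eps$; any leftover rows can be set to $0^d$). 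Every row in group $i$ is the indicator vector of the set $\phi(i) \cup \{d/2 + j : x_i(j) = 1\}$. Because the identifiers $\phi(i)$ are distinct $(k-1)$-subsets of $[d/2]$, a row contains $\phi(i)$ if and only if it lies in group $i$.

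To decode, for each pair $(i,j) \in [N] \times [d/2]$ form the $k$-itemset $T_{i,j} := \phi(i) \cup \{d/2 + j\}$. By the observation above, $f_{T_{i,j}}(D_{\mathbf{x}}) = (m/n) \cdot x_i(j) \in \{0,\, m/n\}$, and $m/n > \eps$. Hence the \IFI\ guarantee (with probability $\ge 1-\delta$ over the sketch's randomness, simultaneously over all $k$-itemsets) forces $\recalg(\sumalg(D_{\mathbf{x}},k,\eps,\delta), T_{i,j}) = x_i(j)$ for every $i,j$. Thus the sketch determines $\mathbf{x}$ exactly with probability $\ge 1-\delta$.

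Finally, taking $\mathbf{x}$ uniformly random in $(\bits^{d/2})^N$, the sketch (plus the decoder $\recalg$) is a randomized encoding of a source with $Nd/2 \ge d/(5\eps)$ bits of entropy, succeeding with probability $\ge 1-\delta$ for a constant $\delta < 1$. Standard counting (for any fixed outcome of the sketch's randomness, the deterministic map $\recalg(\sumalg(D_{\mathbf{x}}), \cdot)$ can recover at most $2^{|\sumalg|}$ distinct $\mathbf{x}$'s) yields $|\sumalg| \ge (1-\delta) \cdot Nd/2 - O(1) = \Omega(d/\eps)$ in the worst case. The only real subtlety — not a genuine obstacle, but the reason the theorem asks for $n \ge 1/\eps$ rather than $n \ge 1/(2\eps)$ — is that the groups must have $m/n$ \emph{strictly} greater than $\eps$ rather than land on the $[\eps/2,\eps]$ don't-care interval, which forces $N \le 1/(2\eps)$ and costs only a constant factor.
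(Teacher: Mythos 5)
Your proposal is correct and uses essentially the same construction and encoding argument as the paper: identifier columns in the first half of $[d]$, data columns in the second half, and $k$-itemsets of the form $\phi(i)\cup\{d/2+j\}$ to read off individual bits. Your only real deviation is choosing $N=\lfloor 1/(2\eps)\rfloor$ so that $m/n>\eps$ \emph{strictly}, which correctly avoids the boundary case $f_T=\eps$ (which falls in the sketch's don't-care interval) that the paper's write-up with $n=1/\eps$ equal-frequency rows glosses over; the one cosmetic caveat is that you should set $N=\max(1,\lfloor 1/(2\eps)\rfloor)$ so the construction also covers $\eps\in(1/2,1)$.
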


%In fact, the same $\Omega(d/\eps)$ lower bound applies even to the easier \SIFI\ sketching problem, for which it is essentially optimal despite its simplicity. 
%\vspace{-2mm}
%\begin{theorem}[Informal version of Theorem \ref{thm:easylowerboundReviseds}]
%\label{thm:informal1s}
%Assume $k \geq 2$ and $1/\eps < {d/2 \choose k-1}$. If $n$ is sufficiently large relative to $d$, $k$, and $1/\eps$, 
%then any sketch $\sumalg$ for the \SIFI\ problem satisfies $|\sumalg(n, d, k, \eps, \delta)| = \Omega(d/\eps)$.
%\end{theorem}
\begin{theorem}\label{thm:easylowerboundReviseds}
Let $k \geq 2$. 
Suppose that $1/\eps \le {\cols/2 \choose k-1}$, and $\delta < 1/3$.
Then for $n \geq 1/\eps$, the space complexity of any valid \SIFI\ sketch is $|\sumalg(n, k, d, \eps, \delta)| = \Omega(d/\eps)$. 
\end{theorem}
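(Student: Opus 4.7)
I would prove the bound via an information-theoretic encoding argument: exhibit a distribution on databases whose entropy is $\Omega(d/\eps)$, show that each information-carrying bit can be read out of any valid \SIFI\ sketch with failure probability at most $\delta$, and then invoke Fano's inequality together with the data-processing inequality to conclude that the sketch must have $\Omega(d/\eps)$ bits. The main difficulty is precisely the ``For-Each'' relaxation, which rules out arguments based on simultaneous correctness on all queries and forces a bit-error-rate argument rather than an exact-reconstruction one.

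First, I would construct the hard family. Put $m := \lceil 1/(2\eps) \rceil$ and use the hypothesis $1/\eps \le \binom{d/2}{k-1}$ to pick $m$ distinct subsets $S_1,\dots,S_m \subseteq [d/2]$ of size $k-1$. For each vector $b = (b_{i,j})_{i \in [m],\, j \in [d/2]} \in \{0,1\}^N$ with $N := m\,d/2 = \Theta(d/\eps)$, let $\database_b$ consist of two copies of the row $\bx_i$ whose support is $S_i \cup \{d/2+j : b_{i,j}=1\}$, for each $i \in [m]$; this gives $n = 2m \ge 1/\eps$, and for strictly larger values of $n$ routine row duplication fills out the required row count without disturbing the frequency structure. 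Take the ``key'' query for coordinate $(i,j)$ to be $T_{i,j} := S_i \cup \{d/2+j\}$. The distinctness and common cardinality of the $S_i$ force the only row of $\database_b$ containing all of $S_i$ to be $\bx_i$, so $f_{T_{i,j}}(\database_b) = 2 b_{i,j}/n \in \{0, 2\eps\}$. Hence any valid \SIFI\ sketch must output $b_{i,j}$ on the query $T_{i,j}$ with probability at least $1-\delta$ over its coins, for every $b$.

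Next, I would apply Fano's inequality. Draw $b$ uniformly from $\{0,1\}^N$ independently of the sketch's randomness $R$, write $S := \sumalg(\database_b, k, \eps, \delta)$ for the resulting (random) sketch, and set $\hat b_{i,j} := \recalg(S, T_{i,j})$, a deterministic function of $S$. The \SIFI\ guarantee gives $\Pr_R[\hat b_{i,j} \ne b_{i,j} \mid b] \le \delta$ for every $b$, hence $\Pr_{b,R}[\hat b_{i,j} \ne b_{i,j}] \le \delta$. The binary Fano inequality then yields $H(b_{i,j} \mid \hat b_{i,j}) \le h(\delta)$; combining with subadditivity of conditional entropy and the fact that additional conditioning only reduces entropy,
\[
H(b \mid \hat b) \;\le\; \sum_{i,j} H(b_{i,j} \mid \hat b_{i,j}) \;\le\; N\,h(\delta),
\]
so $I(b;\hat b) \ge N\bigl(1-h(\delta)\bigr)$.

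Finally, data processing along the Markov chain $b \to S \to \hat b$ gives $I(b;\hat b) \le I(b;S) \le H(S) \le s$, where $s$ is the worst-case bit-length of the sketch, and combining with the previous display yields $s \ge N\bigl(1-h(\delta)\bigr) = \Omega(d/\eps)$, since $\delta < 1/3$ implies $h(\delta) \le h(1/3) < 1$. The crux of the argument is handling the For-Each failure mode: per-bit recovery has a constant positive error rate, so a naive $2^s \ge 2^N$ counting bound cannot be applied directly, and the $h(\delta)$-factor penalty from Fano's inequality is exactly what keeps the resulting lower bound at the optimal $\Omega(d/\eps)$ so long as $\delta$ is a constant bounded away from $1/2$ (which $\delta < 1/3$ supplies).
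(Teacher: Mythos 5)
Your proof is correct, and it takes a genuinely different (though related) route from the paper. The paper proves Theorem~\ref{thm:easylowerboundReviseds} by reducing the one-way communication complexity of \textsc{INDEX} to the \SIFI\ sketching problem: Alice encodes her input $x\in\{0,1\}^{N}$ (with $N=(d/2)\cdot 1/\eps$) as a database $\database_x$ using essentially the same construction you give, sends $\sumalg(\database_x)$ to Bob, who runs $\recalg$ on $T_y$ to recover $x_y$; the classical $\Omega(N)$ lower bound for \textsc{INDEX} then gives the result. You instead run the information-theoretic argument directly: Fano's inequality on each bit (per-coordinate error probability $\le\delta$ gives $H(b_{i,j}\mid\hat b_{i,j})\le h(\delta)$), sum using subadditivity and the fact that further conditioning only reduces entropy, and then apply data processing along $b\to S\to\hat b$ to conclude $s\ge N(1-h(\delta))$. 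Since the standard proof of the \textsc{INDEX} lower bound is itself a Fano/mutual-information argument, you have in effect inlined the proof that the paper invokes as a black box; the end result and quantitative dependence on $\delta$ (needing $\delta$ bounded away from $1/2$) are the same. Your version is self-contained and makes the role of the ``For-Each'' relaxation (per-query error, not simultaneous correctness) transparent; the paper's version is shorter by deferring to a known theorem. One cosmetic difference: you duplicate each of the $m=\lceil 1/(2\eps)\rceil$ distinguished rows so that the informative frequencies are $2\eps$ rather than $\eps$, sidestepping the boundary case $f_T=\eps$ where the \SIFI\ guarantee is silent; the paper's construction uses $1/\eps$ distinct rows (with duplication only to handle $n>1/\eps$) and thus works at the $f_T\ge\eps$ boundary, which is slightly loose against the strict inequality $f_T>\eps$ in Definition~\ref{def:sifi}. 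Both are routine to patch, but your choice is cleaner.
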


Theorem \ref{thm:easylowerboundReviseds} is tight whenever it applies (i.e., when $1/\eps < {d/2 \choose k-1}$), as it matches the $O(d/\eps)$
upper bound obtained by the algorithm \subsample\ for the \SIFI\ sketching problem (see Theorem \ref{thm:upperbound}). And
the algorithm \releaseanswers\ achieves a summary size of ${d \choose k}$ for the \SIFI\ sketching problem, which is
asymptotically optimal when $1/\eps \geq {d/2 \choose k-1}$ and $k=O(1)$. Therefore,
Theorems \ref{thm:upperbound} and \ref{thm:easylowerboundReviseds} together precisely resolve the complexity of \SIFI\ sketches for all values of $d$ and $\eps$, when $k=O(1)$.
%\vspace{-1mm}
\paragraph{Resolving the complexity of \IFI\ sketches.}
Theorem \ref{thm:easylowerboundRevised} is tight for \IFI\ sketches when $1/\epsilon$ is small relative
to the other input parameters $n$ or $d$. In particular, when $n=1/\eps$, 
 \releasedb\ provides a trivial matching sketch that is $O(n\cols) = O({\cols/\epsilon})$ bits in size.
In addition, when $k=O(1)$ and $1/\eps \geq {\cols/2 \choose k-1}$,
$\releaseanswers$ provides a matching sketch that is $O({d \choose k}) = O(d/\epsilon)$ bits in size.
 The tightness of Theorem \ref{thm:easylowerboundRevised} in these parameter regimes is arguably surprising, as it shows that the \IFI\ sketching problem is
 \emph{equivalent} in complexity to its For-Each analog in these regimes.
 
However, when $1/\eps \ll {d/2 \choose k-1}$, Theorem \ref{thm:easylowerboundRevised} is not tight for \IFI\ sketches,
because it has suboptimal dependence on $d$ and $k$. Our main result for the \IFI\ sketching problem establishes a tight lower bound, matching
the $O(\eps^{-1} d \log{d \choose k})$ upper bound 
for the problem obtained by the algorithm \subsample.

%For further discussion of the significance of Theorem \ref{thm:informal2}, see Appendix \ref{app:discussion}.

%with optimal dependence on $d$. For clarity, in this informal overview, we omit 
%the technical relationships that the parameters $d$, $k$, and $1/\eps$
%must satisfy for the following theorems to hold.

%\begin{theorem}[Informal version of Theorem \ref{thm:hardlowerboundformal1}]
%\label{thm:informal2}
%For any $k \geq 3$, if $n$ is sufficiently large relative to $d$, $k$, and $1/\eps$, and $1/\eps < {d/3 \choose \lfloor \frac{k-1}{2} \rfloor}$
%then any sketch $\sumalg$ for the \IFI\ problem must satisfy $|\sumalg(n, d, k, \eps, \delta) | = \Omega(k \cdot d\log (d/k)/\eps)$.
%\end{theorem}
\begin{theorem} \label{thm:hardlowerboundformal1}
Let $k \geq 3$, and suppose that $1/\eps = O\left({d/3 \choose \lfloor(k-1)/2\rfloor}\right)$ and the failure probability $\delta < 1$ 
is a constant.
Then for any $n \geq k d \log(d/k)/\eps$, the space complexity of any valid \IFI\ sketch is $|\sumalg(n, k, d, \eps, \delta)| = \Omega(k d \log(d/k)/\eps)$.  
\end{theorem}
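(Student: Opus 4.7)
The plan is to amplify the base $\Omega(d/\eps)$ lower bound of Theorem~\ref{thm:easylowerboundRevised} by a $\Theta(k\log(d/k)) = \Theta(\log\binom{d}{k})$ factor via a tagging-based encoding, following the template described in the introduction and inspired by the amplification technique of~\cite{buv}. At a high level, I will construct a hard distribution over databases on $d$ columns that packs $t := \Theta(k\log(d/k))$ essentially independent copies of the base-case hard distribution, in such a way that any valid \IFI\ sketch for the combined database on $k$-itemsets at precision $\eps$ postprocesses into $t$ valid \IFI\ sketches, one per sub-database, for $k'$-itemsets at precision $\Theta(\eps)$, with $k' := \lfloor(k-1)/2\rfloor$.

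Concretely, I would partition the $d$ columns into a data block of $d_0 = \lfloor d/3\rfloor$ columns and a tag block of $d_1 = d - d_0$ columns, then choose $t$ pairwise-incomparable tag itemsets $L_1,\dots,L_t \subseteq [d_1]$ each of size $k - k' = \lceil (k+1)/2\rceil$. The assumption $1/\eps = O\!\left(\binom{d/3}{\lfloor (k-1)/2\rfloor}\right)$ is precisely the regime in which (i) a sufficiently rich antichain of tag itemsets exists within the tag block, and (ii) the base-case hypothesis of Theorem~\ref{thm:easylowerboundRevised} is met for the extracted sub-problems (with parameters $d' = d_0$, itemset size $k'$, and precision $\Theta(\eps)$). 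I then sample $t$ independent base-hard databases $D_1,\dots,D_t$ on the data block and assemble $D$ by attaching $L_i$ to every row of $D_i$; a $k$-itemset query of the form $T_0 \cup L_i$ with $|T_0| = k'$ then selects exactly sub-database $D_i$'s rows (by the antichain property) and yields an estimate of $f_{T_0}(D_i)$ up to a known rescaling.

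Given the reduction, I would apply Theorem~\ref{thm:easylowerboundRevised} separately to each extracted sub-sketch, obtaining $\Omega(d_0/\eps)$ bits per sub-database; since the $D_i$'s are drawn independently, the per-sub-database lower bounds compose additively by a standard mutual-information argument to yield $\Omega(t \cdot d_0/\eps) = \Omega(k d \log(d/k)/\eps)$ bits for the combined sketch. The row budget $n \geq k d \log(d/k)/\eps$ stated in the theorem exactly accommodates the $t$ sub-databases of $\Theta(1/\eps)$ rows each, and the failure probability of each extracted sub-sketch stays below a small constant by a union bound over the $t$ extractions applied to the constant $\delta$ assumed for the combined sketch.

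The main obstacle, and the reason this theorem is substantially harder than Theorem~\ref{thm:easylowerboundRevised}, is ensuring that the accuracy loss from tagging is only a constant multiplicative factor rather than a factor of $t$: the most naive ``one tag per sub-database'' design has each sub-database occupying only a $1/t$ fraction of the combined rows, which dilutes queries by $t$ and exactly cancels the $t$-fold entropy gain. The fix is to design the tag assignment so that each row simultaneously participates in many sub-databases via a carefully chosen combinatorial design whose incidences realize a constant-density selection for every $L_i$, while remaining compatible with the antichain property needed for unambiguous decoding; calibrating this design is where the specific choices $k' = \lfloor (k-1)/2\rfloor$ and $d_0 = \lfloor d/3\rfloor$ come from, and is also the source of the hypothesis bound on $1/\eps$. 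Once the constant-dilution design is exhibited, the remaining steps---verifying the base-case hypothesis, the union-bounded failure probability, and the row count---are routine.
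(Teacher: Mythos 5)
Your instinct that the bound of Theorem~\ref{thm:hardlowerboundformal1} is obtained by amplifying a weaker lower bound via a tagging construction in the spirit of~\cite{buv} is correct in outline, but your decomposition is backwards and the crucial step you flag --- avoiding the $t$-fold dilution --- is genuinely unresolved in your writeup, not a routine calibration. If you assemble $t$ tagged copies of an $\Omega(d/\eps)$-hard database, each query $T_0 \cup L_i$ has its frequency scaled by the fraction of rows carrying tag $L_i$; to keep that fraction constant you would need each row to carry $\Omega(t)$ of the tags simultaneously, but then a single row belongs to many sub-databases at once and the $D_i$'s can no longer be independently chosen, which kills the additive entropy argument. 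No antichain or design fixes this: constant per-tag density and independence of the $t$ reconstructions are directly in tension, and that is exactly why this route stalls.

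The paper sidesteps this by splitting the two amplification factors and only paying dilution where it is free. It first proves an $\Omega(k d \log(d/k))$ lower bound at \emph{constant} precision $\eps = 1/50$ with \emph{no} tagging at all: using Fact~\ref{fact} one builds $v = (k-1)\log(d/(k-1))$ rows $(x_i, y_i)$ whose $x$-halves are shattered by $(k-1)$-itemsets, so that for every $s \in \bits^v$ and every column $j$ of the $y$-half the $k$-itemset $\itemset_s \cup \{j\}$ evaluates to $\langle s, t\rangle/v$ with $t$ the $j$-th column of the $y$-block. Lemma~\ref{sublemma} then recovers each column $t$ up to small Hamming error from the threshold bits, and error-correcting coding turns this into exact recovery of $\Omega(dv)$ arbitrary bits. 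This step amplifies by $k\log(d/k)$ without any row-fraction dilution, because the $v$ rows are shared across all inner-product queries rather than partitioned among tags. Only then is tagging used, and only in the $1/\eps$ direction: take $m = 1/(50\eps)$ independent copies of the constant-precision hard instance, tag the $i$-th with a distinct $\lfloor(k-1)/2\rfloor$-itemset on a fresh block of $d$ attributes (this is where the $d$--$d$--$d$, i.e.\ $3d$-column, split and the $\binom{d/3}{\lfloor(k-1)/2\rfloor}$ hypothesis come from), and stack. Here each sub-database does occupy only a $1/m$ fraction of rows, so frequencies are diluted by $m$; but that is exactly compensated because the big sketch is asked for precision $\eps = 1/(50m)$ and therefore delivers precision $1/50$ on each sub-database, which is all the constant-$\eps$ argument needs. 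So dilution is never ``fixed''---it is embraced precisely where the precision budget absorbs it. You should rebuild your argument along these lines; the amplification of Theorem~\ref{thm:easylowerboundRevised} you propose cannot, on its own, yield the extra $k\log(d/k)$ factor.
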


\paragraph{Comparison to Price's work.}
Price independently proved an $\Omega(d \log(d)/\eps)$ lower bound on the size of \IFI\ sketches for $k \geq 2$ and $1/\eps \leq d^{.99}$ \cite{price}. 
This matches the lower bound of Theorem \ref{thm:hardlowerboundformal1} for any $k=O(1)$, but not for $k=\omega(1)$. 
One advantage of Price's result is that it holds for $k=2$ and any value of $\eps$ satisfying $1/\eps \leq d^{.99}$;
in contrast, Theorem \ref{thm:hardlowerboundformal1} holds for $k \geq 3$. We remark that our proof of Theorem \ref{thm:hardlowerboundformal1} 
actually establishes the  $\Omega(d \log(d))$ lower bound even for $k=2$, but 
our extension of the proof to sub-constant values of $\eps$ requires $k \geq 3$.

%Notice that the $O(d/\eps)$ upper bound of Theorem \ref{thm:upperbound} for the \SIFI\ sketching problem achieved by \subsample, and the $\Theta(k \cdot d \cdot \log(d/k)/\eps)$ lower
%bound of Theorem \ref{thm:informal2} for the \IFI\ sketching problem together establish the following unsurprising yet non-trivial
%fact: 
%the \IFI\  sketching problem is strictly harder than the \SIFI\ sketching problem whenever $\eps > ??????$. 

%\vspace{-2mm}
\paragraph{Essentially resolving the complexity of \IFE\ sketches.}
Theorem \ref{thm:hardlowerboundformal2} below establishes a lower bound for the \IFE\ sketching problem with a quadratically
stronger dependence on $1/\eps$, relative to the linear dependence that 
is necessary and sufficient for the \IFI\ problem. 
Our lower bound matches the space usage of \subsample\ (cf. Lemma \ref{lemma:subsample}) up to a factor of $\log_{(q)}(1/\eps)$ for any desired constant $q > 0$, where $\log_{(q)}$ denotes
the logarithm function iterated $q$ times (e.g.~$\log_{(3)}(x) = \log \log \log (x)$).\footnote{\label{woodrufffootnote}An anonymous reviewer has pointed out that 
subsequent work of Van Gucht et al. \cite[Theorem 6]{woodruff}
can be combined with our amplification techniques to remove the $\log_{(q)}(1/\eps)$ factor in the bound of Theorem \ref{thm:hardlowerboundformal2}, when $\eps > 1/\sqrt{d}$;
the resulting bound matches the size of the sketch
produced by \subsample\ up to a constant factor. (Note that
Theorem \ref{thm:hardlowerboundformal2} holds even for values of $\eps$ smaller than $1/\sqrt{d}$, whenever $k > 3$.)
In more detail, the proof of \cite[Theorem 6]{woodruff} 
implies an $\Omega(d/\eps^2)$ lower bound on the size of any \IFE\ sketch answering 99\% of all $2$-itemset queries, when $\eps>1/\sqrt{d}$. 
By combining this result with our amplification techniques, it is possible to prove an $\Omega\left(\frac{k d \log(d/k)}{\eps^2}\right)$ lower bound
on the size of any \IFE\ sketch answering all $k$-itemset queries for $k\geq 3$, whenever $\eps > 1/\sqrt{d}$.}

%\begin{theorem}[Informal version of Theorem \ref{thm:hardlowerboundformal2}]
%\label{thm:informal3}
%Let $c, q \in \mathbb{N}$, $c \geq 2$, be constants. For any $k \geq c+1$, if $n$ is sufficiently large relative to $d$, $k$, and $1/\eps$, and $\log_{(q)}(1/\eps)/\eps^2 < d^{c-1}$,
%then any sketch $\sumalg$ for the \IFE\ problem must satisfy 
%$|\sumalg(n, d, k, \eps, \delta)| = \Omega\left(\frac{d \cdot k \cdot \log(d/k)}{\eps^2 \cdot \log_{(q)}(1/\eps)}\right).$
%\end{theorem}
\begin{theorem} \label{thm:hardlowerboundformal2}
Fix any integer constants $c \geq 2$, $q \geq 1$, and let $\delta < 1$ be a constant.  Let $k \geq c+1$, let $d, \eps$ satisfy $1/\eps^2 \leq d^{c-1}/\log_{(q)}(1/\eps^2)$, and let $\delta < 1$ be a constant. Let $\sumalg$ be a \IFE\ sketching algorithm capable of answering all $k$-itemset frequency queries to error $\pm \eps$ on databases $\database \in \left(\{0, 1\}^d\right)^n$ for any $n > d \log(d/k) \log_{(q)}(1/\eps^2) / \eps^2$. Then $|\sumalg(n, d, k, \eps, \delta)| = \Omega\left(\frac{k d \log(d/k)}{\eps^2 \log_{(q)}(1/\eps)}\right).$
\end{theorem}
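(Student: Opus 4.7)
The plan is to prove Theorem \ref{thm:hardlowerboundformal2} via the encoding / reconstruction template of Section 1.4, combined with an iterated BUV-style amplification in the spirit of \cite{buv}. The goal is to exhibit a distribution $\mathcal{D}'$ on databases in $(\{0,1\}^d)^n$ such that any \IFE\ sketch answering every $k$-itemset query to additive error $\eps$ on a draw $\database \sim \mathcal{D}'$ can be decoded to recover $\Omega(kd \log(d/k)/(\eps^2 \log_{(q)}(1/\eps)))$ bits of information about $\database$; Shannon's lower bound then forces the sketch to be that large.

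First I would establish a base lower bound of $\Omega(d/\eps^2)$ bits for \IFE\ sketches of $c$-itemsets on $(\{0,1\}^d)^n$. For $c = 2$ and $\eps \ge 1/\sqrt{d}$, this is a standard direct encoding: fill an $m \times d$ table with i.i.d.\ random bits for $m = \Theta(1/\eps^2)$ and observe that $\eps$-accurate pairwise ($2$-itemset) frequencies already pin down the table up to a set of size $2^{o(md)}$, in the style of the argument behind the $2$-itemset bound referenced in Footnote \ref{woodrufffootnote}. For $c > 2$, an analogous base bound of $\Omega(d/\eps^2)$ for $c$-itemsets holds whenever $1/\eps^2 \le d^{c-1}$, where the combinatorial slack afforded by $\binom{d}{c}$ query values is exactly what lets one invert the frequencies back to the underlying columns. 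Next I would apply the amplification step: given an $\Omega(s)$ base bound for $k_0$-itemset sketches, one constructs a composite distribution by appending $O(\log t)$ tag columns to a concatenation of $t$ independent copies of the base instance, so that any $(k_0 + \log t)$-itemset sketch at error $\eps / t$ simultaneously decodes to $t$ independent base sketches, gaining a factor of $t$ in the lower bound. Iterating this amplification $q$ times with tag counts $t_1, \dots, t_q$ satisfying $\prod_i t_i = \Theta(k \log(d/k))$ yields the target scaling.

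The main obstacle is accuracy bookkeeping across the $q$ iterations. Each amplification stage tightens the per-subsketch error from $\eps$ to $t_i \eps$; doing the whole boost with a single large $t$ would drive the effective error past the point where the base bound still applies. The BUV-style trick is to split the amplification over $q$ stages with modest $t_i$ each, so that the base bound remains applicable at the finally-sharpened error while the aggregate gain still reaches $k \log(d/k)$. The accumulated tightening is precisely what produces the $\log_{(q)}(1/\eps)$ loss in the denominator, and the technical assumption $1/\eps^2 \le d^{c-1}/\log_{(q)}(1/\eps^2)$ in the hypothesis is exactly what keeps the base $c$-itemset lower bound valid through the last iteration. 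Threading the row-count requirement $n \ge d \log(d/k) \log_{(q)}(1/\eps^2)/\eps^2$ through each stage of the construction is then a routine calculation.
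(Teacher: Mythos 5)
Your high-level template (encoding argument plus an amplification that multiplies a base $c$-itemset bound up to a $k$-itemset bound) matches the paper's strategy, but the internal mechanics are wrong in ways that would break the proof.

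First, you have misattributed the $\log_{(q)}(1/\eps)$ factor. In the paper it does \emph{not} arise from iterating the amplification $q$ times; the paper's amplification is a \emph{single} step that takes a $c$-itemset bound directly to a $k$-itemset bound by concatenating $v = (k-c)\log(d/(k-c))$ independent base databases tagged with the shattered strings $x_1,\dots,x_v$ of Fact \ref{fact}. The iterated-log loss is instead inherited from the base bound itself, namely Lemma \ref{lemma:de} (De's $L_1$/LP-decoding reconstruction), whose quantitative dependence comes from Rudelson's theorem that the Hadamard product of random $\{0,1\}$ matrices is a Euclidean section only in the regime $d^{k-1} = o(n\log_{(q)} n)$. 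The technical hypothesis $1/\eps^2 \le d^{c-1}/\log_{(q)}(1/\eps^2)$ is there to satisfy precisely that Rudelson condition for the base $c$-itemset result, not to survive a chain of amplification rounds.

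Second, and more seriously, your error accounting is incorrect: you claim each amplification stage tightens the required sub-sketch accuracy from $\eps$ to $t_i\eps$, and you split into stages to keep that multiplicative blow-up under control. But the paper's amplification loses only a \emph{constant} factor in accuracy, not a factor of $v$. This is because the derived quantity $\frac{1}{v}\langle s, z_T\rangle$ is itself a $k$-itemset frequency of the big database, so an $\eps$-accurate \IFE\ sketch gives $\eps$-accurate values of these averages directly; Lemma \ref{lemma:end} then produces $\hat z_T$ with average $L_1$ error $O(\eps)$, and two applications of Markov's inequality yield, for $96\%$ of the blocks $i$, per-query error $O(\eps/\gamma)$ on a $1-\gamma$ fraction of $c$-itemsets. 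That is why the single-step amplification with large $v$ works at all. Relatedly, this averaging argument only hands the base decoder answers that are accurate \emph{on a fraction of queries, on average} --- so the base bound must be robust to partial, average-case error. Your proposed "standard direct encoding" base bound (pin the table down from exact pairwise frequencies) would not have this robustness; it is exactly the reason the paper replaces KRSU's $L_2$-pseudoinverse reconstruction with De's $L_1$-minimization result. Without that, the amplification step cannot be completed.
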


For illustration, consider fixing the constants $c = q = 10$.  Then theorem says that for every $k \geq 11$, if $1/\eps = O(d^{4.5}/\log_{(10)}(d))$, and the database size $n$ is sufficiently large, then there is a space lower bound of \\
$|\mathcal{S}| = \Omega\left(kd\log(d/k)/\left(\eps^2 \log_{(10)}(1/\eps)\right)\right)$.

Finally, we use a simple reduction to show that the above theorem also implies a
lower bound for the For-Each case, which is optimal up to a factor of $\log_{(q)}(1/\eps)$.\footnote{Footnote \ref{woodrufffootnote}
implies that one can remove the $\log_{(q)}(1/\eps)$ factor in the bound of Theorem \ref{thm:hardlowerboundformal3} when $\eps > 1/\sqrt{d}$.
The resulting bound matches the size of the sketch
produced by \subsample\ in this parameter regime up to a constant factor.}

%\begin{theorem}[Informal version of Theorem \ref{thm:hardlowerboundformal3}]
%\label{thm:informal4}
%Let $c, q \in \mathbb{N}$, $c \geq 2$, be constants. For any $k \geq c+1$, if $n$ is sufficiently large relative to $d$, $k$, and $1/\eps$, and $\log_{(q)}(1/\eps)/\eps^2 < d^{c-1}$,
%then any sketch $\sumalg$ for the \SIFE\ problem must satisfy $|\sumalg(n, d, k, \eps, \delta)| = \Omega\left( \frac{d}{ \eps^2 \cdot \log_{(q)}(1/\eps)}\right).$
%\end{theorem}
\begin{theorem} \label{thm:hardlowerboundformal3}
Fix any integer constants $c \geq 2$, $q \geq 1$.  Let $k \geq \max\{3, c+1\}$, let $d, \eps$ satisfy $1/\eps^2 \leq d^{c-1}/\log_{(q)}(1/\eps^2)$, and let $\delta < 1/2$ be a constant. Let $\sumalg$ be an \SIFE\ sketching algorithm capable of answering any (single) $k$-itemset frequency queries to error $\pm \eps$ on databases $\database \in \left(\{0, 1\}^d\right)^n$ for any $n > d \log(d/v) \log_{(q)}(1/\eps^2) / \eps^2$. Then $|\sumalg(n, d, k, \eps, \delta)| = \Omega\left(\frac{d}{\eps^2 \log_{(q)}(1/\eps)}\right).$
\end{theorem}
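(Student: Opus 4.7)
The plan is to obtain Theorem \ref{thm:hardlowerboundformal3} from Theorem \ref{thm:hardlowerboundformal2} via a standard ``median-of-independent-copies'' boosting reduction, taking advantage of the assumed gap $\delta < 1/2$ in the \SIFE\ failure probability.

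Concretely, suppose $(\sumalg, \recalg)$ is a valid \SIFE\ sketch of size $s := |\sumalg(n,d,k,\eps,\delta)|$ for the parameter regime in the statement. I would define a new sketch $(\sumalg', \recalg')$ as follows. For a sufficiently large constant $C$, set $t := \lceil C \cdot k \log(d/k) \rceil$, and let $\sumalg'(\database)$ be the concatenation of $t$ independent runs $\sumalg_1(\database), \ldots, \sumalg_t(\database)$; its size is $t \cdot s$. On input a $k$-itemset $T$, let $\recalg'$ output $\mathrm{median}_{i \in [t]} \, \recalg(\sumalg_i(\database), T)$.

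For any fixed $k$-itemset $T$, by definition of \SIFE\ each individual estimate $\recalg(\sumalg_i(\database), T)$ lies in $[f_T - \eps, f_T + \eps]$ independently with probability at least $1-\delta > 1/2$. The median fails to lie in this interval only if a strict majority of the $t$ individual estimates fall outside. By a Chernoff bound this happens with probability at most $\exp(-\Omega(t \cdot (1/2 - \delta)^2)) \le \exp(-\Omega(k \log(d/k)))$ for $C$ large enough, which is at most $\tfrac{1}{2} \binom{d}{k}^{-1}$ since $\binom{d}{k} \le (ed/k)^k$. A union bound over all $\binom{d}{k}$ many $k$-itemsets $T$ then shows that $(\sumalg', \recalg')$ is a valid \IFE\ sketch with failure probability at most $1/2$, for the same parameters $(n, d, k, \eps)$ that satisfy the hypotheses of Theorem \ref{thm:hardlowerboundformal2} (the conditions $k \ge c+1$, $1/\eps^2 \le d^{c-1}/\log_{(q)}(1/\eps^2)$, and sufficiently large $n$ are inherited directly).

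Theorem \ref{thm:hardlowerboundformal2} then forces $t \cdot s \ge \Omega\!\left(\tfrac{kd\log(d/k)}{\eps^2 \log_{(q)}(1/\eps)}\right)$, and dividing by $t = \Theta(k \log(d/k))$ yields the desired bound $s = \Omega\!\left(\tfrac{d}{\eps^2 \log_{(q)}(1/\eps)}\right)$. There is no serious obstacle here; the only points requiring mild care are (i) that the \SIFE\ guarantee has a real gap from $1/2$, so the Chernoff bound gives the rate $\exp(-\Omega(t))$ needed to afford a union bound over the $\binom{d}{k} \le (ed/k)^k$ queries, and (ii) that the output of $\recalg'$ is the median of numerical estimates (not a majority vote), which preserves the $\pm \eps$ error guarantee as long as a majority of the $t$ estimates are accurate.
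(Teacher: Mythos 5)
Your proposal is correct and follows essentially the same route as the paper's proof: the paper likewise transforms the \SIFE\ sketch into an \IFE\ sketch by taking the median of $\Theta\!\left(\log\binom{d}{k}\right) = \Theta(k\log(d/k))$ independent copies, uses Chernoff plus a union bound over all $\binom{d}{k}$ itemsets, and then invokes Theorem~\ref{thm:hardlowerboundformal2} and divides out the $\Theta(k\log(d/k))$ overhead.
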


%\vspace{-5mm}
\subsection{Lower Bound Proofs for Itemset-Frequency-Indicator Sketches}
\label{firstproofs}
%%\vspace{-1mm}
\subsubsection{First Lower Bound Proofs: Theorems \ref{thm:easylowerboundRevised} and \ref{thm:easylowerboundReviseds}}
\label{sec:ifilbs}
We begin by proving our two simplest bounds (Theorems \ref{thm:easylowerboundRevised} and \ref{thm:easylowerboundReviseds}). Recall that the former
applies to \IFI\ sketches, and the latter applies even to their For-Each analogs. %The former is tightest when $\eps$ is small, relative to $d$,
%while the latter is tight 
The proofs consider databases in which even a single appearance of an itemset already makes it frequent.  We show that, unsurprisingly, essentially no compression is possible in this setting. (For simplicity,
we assume that $1/\eps$ is an integer throughout.)
%%\vspace{-2mm}
%\begin{theorem}\label{thm:easylowerboundRevised}
%Let $k \geq 2$. 
%Suppose that $1/\eps \le {\cols/2 \choose k-1}$, and the failure probability $\delta < 1$ is constant.
%Then for $n \geq 1/\eps$, the space complexity of any valid \IFI\ sketch is $|\sumalg(n, k, d, \eps, \delta)| = \Omega(d/\eps)$. 
%\end{theorem}
\begin{proof}[Proof of Theorem \ref{thm:easylowerboundRevised}]
Our proof uses an encoding argument. 
Consider the following family of databases. 
There will be $1/\eps$ possible settings for each row; as $n \geq
1/\eps$, some rows may be duplicated.  For expository purposes, we begin by describing the setting 
with $n = 1/\eps$, in which case there are no duplicated rows.
The first $d/2$ columns in each row contain a unique set of exactly $k-1$ attributes.
The last $d/2$ attributes in each row are unconstrained. 
The only minor technicality is that to ensure that each row can receive a unique set of $k-1$ items from the first $d/2$ attributes, we 
require $1/\eps \le {d/2 \choose k-1}$. 

Given a valid \IFI\ or \IFE\ sketch for this database, one can recover all of the values $\database(i,j)$ where $j\ge d/2$ as follows.
%\del{That is, given the first $d/2$ columns, we can recover the last $d/2$ from the sketch.}
%\todo{I suggest to remove the last sentence. Information theoretic wise we don't need the first $d/2$ columns. True, the recovery algorithm is inefficient but that's besides the point.}
For any $j \geq d/2$, let $T_{i,j}$ be a set of $k$ attributes, where the first $k-1$ attributes in $T_{i,j}$ correspond to the $k-1$ attributes
in the first $d/2$ columns in the $i$th row, and 
the final attribute in $T_{i, j}$ is $j$.
Notice that $T_{i,j} \in \database$ if and only if $\database(i,j) = 1$. Moreover, since $n = 1/\eps$  we have that $f_T \ge \eps$ if
and only if $\database(i,j) = 1$.
Given a valid \IFI\ or \IFE\ sketch for this database, one can iterate over all itemsets $T_{i,j}$ to recover all the values $\database(i,j)$ where $j\ge d/2$.
Since these are an unconstrained set of $d/(2\eps)$ bits, the space complexity of storing them (with $1-\Omega(1)$ failure probability) is $\Omega(d/\eps)$
by standard information theory.

For $n$ a multiple of $1/\epsilon$, we construct a database with
$1/\epsilon$ rows as above, and duplicate each row $n\epsilon$ times;
in this case we have $f_T \ge \eps$ if and only if $\database(i,j) =
n\epsilon$.  More generally, when $n \geq 1/\epsilon$, duplicating each row
at least $\lfloor n\epsilon \rfloor$ times, we have 
$f_T \ge \eps$ if and only if $\database(i,j) \geq
\lfloor n\epsilon \rfloor$.
\end{proof} 
We remark that the condition $1/\eps \le {\cols/2 \choose k-1}$ can be relaxed to $1/\eps \le {\alpha \cols \choose k-1}$ for any constant $\alpha < 1$,
by a simple modification of the proof.
%We remark that an entirely similar proof holds whenever $1/\eps \le {\alpha \cols \choose k-1}$ for any constant $\alpha < 1$;  we chose $\alpha = 1/2$ for convenience.  One simply uses the last $(1-\alpha)d$ bits in each row as the unconstrained bits of $\database$ within the proof of Theorem \ref{thm:easylowerboundRevised}.
We now extend the argument used to prove Theorem \ref{thm:easylowerboundRevised} to the For-Each case.

%\begin{theorem}\label{thm:easylowerboundReviseds}
%Let $k \geq 2$. 
%Suppose that $1/\eps \le {\cols/2 \choose k-1}$, and $\delta < 1/3$.
%Then for $n \geq 1/\eps$, the space complexity of any valid \SIFI\ sketch is $|\sumalg(n, k, d, \eps, \delta)| = \Omega(d/\eps)$. 
%\end{theorem}
\begin{proof}[Proof of Theorem \ref{thm:easylowerboundReviseds}]
Recall that in the setting of one-way randomized communication complexity, there are two parties, Alice and Bob.
Alice has an input $x \in \mathcal{X}$, Bob has an input $y \in \mathcal{Y}$, and Alice and Bob 
both have access to a public random string $r$. Their goal is to compute $f(x, y)$
for some agreed upon function $f \colon \mathcal{X} \times \mathcal{Y} \rightarrow \{0, 1\}$. 
Alice sends a single message $m(x, r)$ to Bob.
Based on this message, Bob outputs a bit, which is required to equal $f(x, y)$ with probability at least $2/3$. 

We consider the well-known \textsc{INDEX} function. In this setting, Alice's input $x$ is an $N$-dimensional binary vector, Bob's input $y$ is an index in $[N]$, and 
$f(x, y)=x_y$, the $y$'th bit of $x$. It is well-known that one-way randomized communication protocols for \textsc{INDEX} require communication $\Omega(N)$ \cite{ablayev}. 
We show how to use any \SIFI\ sketching algorithm $\sumalg$ to obtain a one-way
communication protocol for \textsc{INDEX} on vectors of length $N=(d/2) \cdot 1/\eps$, with communication proportional to $|\sumalg(n, d, \eps, k, \delta)|$. 

Specifically, let $(n, d, k, \eps, \delta)$ be as in the statement of the theorem. Consider any Boolean vector $x \in \{0, 1\}^{N}$, where $N=(d/2) \cdot 1/\eps$. 
We associate each index $y \in [N]$ with a unique $k$-itemset $T_y \subseteq [d]$ of the following form: 
the first $k-1$ attributes in $T_{y}$ are each in $[d/2]$, and 
the final attribute in $T_{y}$ is in $\{d/2+1, \dots, d\}$.
The proof of Theorem \ref{thm:easylowerboundRevised}
established the following fact: given any vector $x \in \{0, 1\}^{N}$, there exists a database $\database_x$ with $d$ columns and $n$ rows 
satisfying the following two properties for all $y \in [N]$: \begin{equation}
 x_y = 1 \Longrightarrow f_{T_y}(\database_x) \geq \eps, \text{ and } x_y = 0 \Longrightarrow f_{T_y}(\database_x) = 0 < \eps/2. \label{eq:inproof} \end{equation}

Hence, we obtain a one-way randomized protocol for the \textsc{INDEX} function as follows: Alice sends to Bob $\sumalg(\database_x, k, \eps, \delta)$ at 
a total communication of $|\sumalg(n, d, \eps, k, \delta)|$ bits,
and Bob outputs $\recalg(\sumalg(\database_x, k, \eps, \delta), T_y)$. It follows immediately from Equation \eqref{eq:inproof} and Definition \ref{def:sifi} that 
Bob's output equals $x_y$ with probability $1-\delta$.
We conclude that $|\sumalg(n, d, \eps, k, \delta)| = \Omega(N) = \Omega(d/\eps)$, completing the proof.
\end{proof}

%\vspace{-2mm}
\subsubsection{A Tight Lower Bound for \IFI\ Sketches: Proof of Theorem \ref{thm:hardlowerboundformal1}}
%\vspace{-2mm}
\label{sec:tightifi}
%\begin{theorem} \label{thm:hardlowerboundformal1}
%Let $k \geq 3$, and suppose that $1/\eps = O\left({d/3 \choose \lfloor(k-1)/2\rfloor}\right)$ and the failure probability $\delta < 1$ 
%is a constant.
%Then for any $n \geq k d \log(d/k)/\eps$, the space complexity of any valid \IFI\ sketch is $|\sumalg(n, k, d, \eps, \delta)| = \Omega(k d \log(d/k)/\eps)$.  
%\end{theorem}

Our proof of Theorem \ref{thm:hardlowerboundformal1} is inspired by an approach from~\cite{buv} for ``bootstrapping'' two weak privacy lower bounds into a stronger lower bound.

\begin{proof}[Proof of Theorem \ref{thm:hardlowerboundformal1}]
We begin by proving an $\Omega(k\cols \log(d/k))$ lower bound for $\eps = 1/50$ and every $k \geq 2$ (the specific choice $\eps=1/50$ is for convenience; the lower bound holds for any suitably small constant). We then
use the ideas underlying the proof of Theorem \ref{thm:easylowerboundReviseds} to extend the lower bound to sub-constant
values of $\eps$, for any $k\geq 3$. 
%\vspace{-1mm}
\paragraph{The case of $\eps=1/50$.} We begin by recalling a basic combinatorial fact about $k$-itemset frequency queries on databases with $d$ attributes. 

\begin{fact}\label{fact}
For any $k' \geq 1$, let $v= k'\cdot \log(d/k')$. There exist strings $x_1, \dots, x_v \in \{0,1\}^d$ such that for every string $s\in \{0,1\}^v$, there is a $k'$-itemset $\itemset_{s}$ such that $f_{\itemset_s}(x_i) = s_i$ for all $i \in [v]$.
\end{fact}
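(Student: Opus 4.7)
}

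The plan is to reformulate the claim in terms of the \emph{columns} of the matrix whose rows are $x_1,\dots,x_v$, and then give an explicit product-style construction. For any candidate row set $x_1,\dots,x_v$, view each attribute $j\in[d]$ as a binary vector $b_j\in\{0,1\}^v$ given by $(b_j)_i=(x_i)_j$. Then for an itemset $T\subseteq[d]$ we have $f_T(x_i)=\bigwedge_{j\in T}(b_j)_i$, so the statement is equivalent to: one can pick $d$ vectors $b_1,\dots,b_d\in\{0,1\}^v$ such that every $s\in\{0,1\}^v$ arises as the bitwise AND of some $k'$ of them. Passing to complements, if we let $Z_j=\{i:(b_j)_i=0\}\subseteq[v]$, this becomes: every subset $S\subseteq[v]$ (the desired ``zero-pattern'') equals $\bigcup_{j\in T}Z_j$ for some $|T|=k'$.

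The key step is a product construction. Partition the coordinates $[v]$ into $k'$ blocks $B_1,\dots,B_{k'}$ of equal size $v/k'=\log(d/k')$. For each block $B_\ell$, introduce one attribute for every subset $A\subseteq B_\ell$, yielding $|B_\ell|$-many blocks of $2^{\log(d/k')}=d/k'$ attributes each, for a total of $k'\cdot(d/k')=d$ attributes, matching the given budget. Concretely, label these attributes $c_{\ell,A}$ and define the rows by
\[
(x_i)_{c_{\ell,A}}=\begin{cases}0&\text{if }i\in A,\\ 1&\text{otherwise,}\end{cases}
\]
so that $Z_{c_{\ell,A}}=A\subseteq B_\ell$.

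Given an arbitrary target $s\in\{0,1\}^v$, set $S=\{i:s_i=0\}$ and choose the $k'$-itemset $T_s=\{c_{\ell,\,S\cap B_\ell}:\ell\in[k']\}$, which has exactly $k'$ elements since it contains one attribute per block. The verification is the routine finish: for any row $i$, let $\ell_0$ be the unique block with $i\in B_{\ell_0}$; then for $\ell\ne\ell_0$ we have $i\notin B_\ell\supseteq S\cap B_\ell$, so $(x_i)_{c_{\ell,S\cap B_\ell}}=1$, while for $\ell=\ell_0$ the value $(x_i)_{c_{\ell_0,S\cap B_{\ell_0}}}$ is $0$ iff $i\in S$. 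Thus $f_{T_s}(x_i)=\bigwedge_\ell(x_i)_{c_{\ell,S\cap B_\ell}}$ is $0$ iff $i\in S$ iff $s_i=0$, as required.

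The only obstacle is cosmetic: the construction assumes $v/k'$ is an integer and $d/k'$ is a power of two. I would handle this by rounding $|B_\ell|$ to $\lfloor\log(d/k')\rfloor$ and padding with duplicate or ``all-ones'' attributes, which affects the guarantee only by absolute constants and is therefore absorbed into the $v=k'\log(d/k')$ scaling used in the subsequent application.
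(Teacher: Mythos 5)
Your construction is correct and is essentially the same as the paper's: a $k'\times k'$ block structure where off-diagonal blocks are all-ones and the columns within each diagonal block range over all $\log(d/k')$-bit patterns, with the chosen itemset picking one column per block according to the restriction of $s$ to that block. The paper reaches this by first exhibiting two warm-up shattered sets and then ``gluing'' them, while you describe the final product structure directly (and you label the diagonal-block columns by complements of the paper's labeling), but the resulting set of rows $x_1,\dots,x_v$ and the verification are the same up to this relabeling.
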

\begin{proof}[Proof of Fact \ref{fact}]
It is well-known that the set of $k'$-itemset frequency queries (equivalently, $k'$-way monotone conjunction queries), when evaluated on $d$-bit vectors (i.e., on $d$-attribute database rows), has VC dimension at least $k' \cdot \log(d/k')$. 
The desired strings $x_1, \dots, x_v$ are simply the shattered set whose existence is guaranteed by having VC dimension $v$. It is also not difficult to directly construct the shattered set; we provide such a direct construction in Appendix \ref{app:vc} for completeness.
\end{proof}

Let $k' = k-1$, let $v = (k-1)\log(d/(k-1))$, and let $x_1,\dots,x_v \in \bits^{\cols}$ be the strings promised by Fact~\ref{fact}.  Let $y_1, \dots, y_v \in \{0, 1\}^d$ be an arbitrary set of $v$ strings of length $d$.  We will show how to construct a database $\database$ with $v$ rows and $2d$ columns such that, with probability at least $1-\delta$, at least 96\% of the $d v$ bits in $(y_1, \dots, y_v)$ can be reconstructed from any \IFI\ sketch $\sumalg(\database, k, 1/50, \delta)$.  This will imply an $\Omega(dv)$ space lower bound.
Specifically, define row $i$ of $\database$ to be \begin{equation} \label{eq:ddef} \database(i) := (x_i, y_i).\end{equation} That is, the first $d$ bits in $\database(i)$ are equal to $x_i$ and the last $d$ bits are equal to $y_i$. 

The key observation behind the reconstruction of the $y_i$'s is that, given \emph{exact} answers to all $k$-way itemset frequency queries, one can compute the inner product between the last $d$ columns of $\database$ and any desired vector. Moreover, it is easy to see that, given sufficiently many inner products, any column of $\database$ can be exactly reconstructed. However, \IFI\ sketches do not provide exact answers to itemset frequency queries; they merely indicate whether the frequency of an itemset is larger than $\eps$ or smaller than $\eps/2$. Nonetheless, we show that in order to reconstruct 96\% of the bits in any given column of $\database$, it is enough to know, for sufficiently many vectors, whether the inner product of the column with the vector is larger than $\eps=1/50$ or smaller than $\eps/2=1/100$. 

In more detail, fix a column $j \in \{d+1, \dots, 2d\}$.  Let $t := (y_{1, j}, \dots, y_{v, j}) \in \{0,1\}^v$ be the bits in this column of $\database$. Fix a string $s$ in $\{0,1\}^v$, and for any $j \in [d]$, let $\itemset_{s, j} = \itemset_s \cup \{j\}$, where $\itemset_s$ is defined as in Fact \ref{fact}.  We claim that the correct answer to the itemset frequency query $f_{\itemset_{s, j}}(\database)$ is $\langle s, t \rangle/v$.  To see this, notice that by the definition of $\database$ (Equation \eqref{eq:ddef}), $\itemset_{s, j}$ is contained in any row $i$ of $\database$ such that $s_i = y_{ij} = 1$, and $\itemset_{s, j}$ is not contained in any other rows. Hence, $f_{\itemset_{s, j}}(\database) = \frac{1}{v} |\{i : s_i=1, y_{ij}=1\}| =\langle s, t \rangle/v$.  %of $\database$ of the form $\database(i) = (x_i,y_i)$ such that $s_i = 1$ and $y_{ij} = 1$. 

It follows that any \IFI\ sketch $\cal{S}$ of $\database$ that provides answers with error parameter $\eps=1/50$ for all $k$-itemsets, provides a bit $b_{s}$ for every $s \in \{0, 1\}^v$ such that the following holds: $b_s=1$ if $\langle s, t \rangle/v > \eps$, and  $b_s=0$ if $\langle s, t \rangle/v < \eps/2$. The following lemma implies that we can use the $b_s$ values to reconstruct a vector 
$t'$ that is close to $t$ in Hamming distance. 

\begin{lemma} \label{sublemma}
Suppose for every $s \in \{0,1\}^{v}$, we are given a bit $b_s$ satisfying $b_s=1$ if $\langle s, t \rangle/v > \eps$, and  $b_s=0$ if $\langle s, t \rangle/v < \eps/2$. Let $t' \in \{0,1\}^v$ be any vector that is consistent with all of the $b_{s}$ values, in the sense that
$\langle s, t' \rangle/v > \eps$ for all $s$ such that $b_s = 1$, and $\langle s, t' \rangle/v < \eps/2$ for all $s$ such that $b_s=0$. Then the Hamming distance between $t$ and $t'$ is
at most $v/25$. 
\end{lemma}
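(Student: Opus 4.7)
My plan is to prove the bound by choosing two specific adversarial vectors $s$ that isolate the two kinds of bit disagreements between $t$ and $t'$. Let $A = \{i : t_i = 1,\ t'_i = 0\}$ and $B = \{i : t_i = 0,\ t'_i = 1\}$, so that the Hamming distance between $t$ and $t'$ is exactly $|A| + |B|$. I will argue separately that $|A| \le \eps v$ and $|B| < \eps v/2$ and then sum.

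For the bound on $|A|$, I would take $s = \mathbb{I}_A$, the indicator vector of $A$. Then $\langle s, t\rangle = |A|$ (since $t_i = 1$ for $i \in A$), while $\langle s, t'\rangle = 0$ (since $t'_i = 0$ for $i \in A$). If we had $|A|/v > \eps$, the hypothesis would force $b_s = 1$, and the consistency assumption on $t'$ would then demand $\langle s, t'\rangle/v > \eps$, contradicting $\langle s, t'\rangle = 0$. Hence $|A| \le \eps v$. For $|B|$ I would take $s = \mathbb{I}_B$ and argue symmetrically on the other side: now $\langle s, t\rangle = 0 < (\eps/2)v$, so the hypothesis forces $b_s = 0$, and the consistency of $t'$ then yields $\langle s, t'\rangle/v < \eps/2$; since $\langle s, t'\rangle = |B|$, we conclude $|B| < \eps v / 2$.

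Combining, $|A| + |B| \le (3/2)\eps v$. Substituting the value $\eps = 1/50$ used in the proof of Theorem~\ref{thm:hardlowerboundformal1} gives $|A| + |B| \le 3v/100 < v/25$, which is the claimed Hamming bound. I do not expect any real obstacle here: the only bit of creativity is recognizing that the indicator vectors $\mathbb{I}_A$ and $\mathbb{I}_B$ are the right witnesses, after which each case collapses to a one-line deduction, and the slack between $3\eps/2 = 3/100$ and $1/25 = 4/100$ confirms that the constants are consistent with the choice $\eps = 1/50$.
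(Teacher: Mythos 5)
Your proof is correct and uses essentially the same key idea as the paper's: take indicator vectors of the disagreement sets and observe that they force a contradiction with the given $b_s$ values. The only cosmetic difference is that you argue directly by bounding $|A| = \{i : t_i = 1, t'_i = 0\}$ and $|B| = \{i : t_i = 0, t'_i = 1\}$ separately (getting the slightly tighter constant $3\eps v/2 = 3v/100$), whereas the paper argues by contrapositive, assuming Hamming distance exceeds $v/25$ and extracting a single disagreement set $S$ of size $> v/50$, WLOG one-sided; both reduce to the same one-line inner-product contradiction.
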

\begin{proof}[Proof of Lemma \ref{sublemma}]
Consider any vector $t' \in \{0, 1\}^v$ such that $t$ and $t'$ differ in more than $v/25$ fraction of bits. Then there must a set of coordinates $S \subseteq [v]$ of size at least $v/50$ such that at least one of the two conditions is satisfied: (a) $t'_j= 1$ and $t_j =0$ for all $j \in S$, or (b) $t'_j=0$ and $t_j=1$ for all $j \in S$. 

Assume without loss of generality that Condition (a) is satisfied (the proof in the case that Condition (b) is satisfied is analogous). Consider the vector $s \in \{0, 1\}^v$ that is the indicator vector of $S$. Then $\langle s, t\rangle = 0$, so $b_s=0$. However, $\langle s, t' \rangle \geq v/50$. This implies that $t'$ is not consistent with the value $b_s$ returned by the \IFI\ sketch, proving the lemma.
\end{proof}

Lemma \ref{sublemma} implies that, for any $k \geq 2$, given the \IFI\ sketch $\mathcal{S}(\database, k, 1/50, \delta)$, we can recover at least 96\% of the bits of $(y_1, \dots, y_v)$ with probability at least $1-\delta$. Suppose we let $(y_1, \dots, y_v)$ be the error-corrected encoding 
of a vector $(y'_1, \dots, y'_z) \in \{0, 1\}^z$, using a code with constant rate that is uniquely decodable from $4\%$ errors (e.g. using a Justesen code \cite{justesen}). Then $z=\Omega(v)$, and it follows from the above that $(y'_1, \dots, y'_z)$ can be \emph{exactly} reconstructed from  $\mathcal{S}(\database, k, 1/50, \delta)$ with probability at least $1-\delta$. 
Hence, $\cal{S}(\database)$ allows for exact reconstruction of $z=\Omega(d v)$ arbitrary bits with probability $1-\delta$. Basic information theory then implies that
$|\mathcal{S}| = \Omega(d v) = \Omega\left(k d  \log(d/k)\right)$. 
%\vspace{-1mm}
\paragraph{The case of $\eps=o(1)$.}
For any $k \geq 3$, suppose that we are given a \IFI\ sketching algorithm $\sumalg$ that is capable of answering \IFI\ queries with error parameter $\eps$ for all $k$-itemsets. For simplicity, we assume $k$ is odd. At a high level, we show that, given $m=\frac{1}{50\eps}$ independent databases $\database_i$, each with $v$ rows and $2d$ columns, we can construct a single ``larger'' database $\database$ with $mv$ rows and $3d$ columns such that the following holds: for \emph{every} $\database_i$, $\sumalg(\database, k, \eps, \delta)$ can be used to answer all \IFI\ queries on $\database_i$ with error parameter $\eps'=1/50$ for all $(k+1)/2$-itemsets. Since we have assumed $k\geq 3$, it holds that $(k+1)/2 \geq 2$; hence, we can apply our earlier analysis to conclude that any such summary for $\database_i$ contains $\Omega\left(kd\log(d/k)\right)$ bits of information, in the sense that it can encode an arbitrary bit vector of this length. It follows that $\sumalg(\database, k, \eps, \delta)$ contains $\Omega\left(kd\log(d/k)/\eps\right)$ bits of information, proving the theorem. Details follow.

Let $\itemset_1, \dots, \itemset_m \subseteq [d]$ be distinct $((k-1)/2)$-itemsets (note that as in the proof of Theorem \ref{thm:easylowerboundRevised}, we can indeed choose $m$ such $\itemset_i$'s as long as $1/\eps < {d \choose (k-1)/2}$). 
Now consider any $m$ independent databases $\database_1, \dots, \database_m$, each with $v$ rows and $2d$ columns. We construct a new $(mv) \times 3d$ database $\database$ by appending the $d$-bit indicator vector of $\itemset_i$ to each row of $\database_i$, and letting $\database$
be the concatenation of all of the resulting databases.

For each $\itemset_i$, let $\itemset' \subseteq [3d]$ be defined via $\itemset'_i = \{j+2d: j \in \itemset_i\}$. That is, $\itemset'_i$
is simply $\itemset$ ``shifted'' to operate on the final $d$ of the $3d$ attributes over which the ``larger'' database $\database$ is defined.
Let $\itemset^* \subseteq [2d]$ be any $(k+1)/2$-itemset, and for each $i \in [m]$, let $\itemset^*_i \subseteq [3d]$ be the $k$-itemset defined via: $\itemset^*_i = \itemset^* \cup \itemset'_i$. Observe that $f_{\itemset^*}(\database_i) = m \cdot f_{\itemset^*_i}(\database')$. 
Hence, $f_{\itemset^*_i}(\database) > \eps$ if and only if $f_{\itemset^*}(\database_i) > 1/50$, and 
$f_{\itemset^*_i}(\database) < \eps/2$ if and only if $f_{\itemset^*}(\database_i) < 1/100$.
That is, one can use $\sumalg(\database, k, \eps, \delta)$ to answer all \IFI\ queries on $\database_i$ with error parameter $1/50$ for all $k$-itemsets (this holds simultaneously for all $i$ with probability $1-\delta$).

By the argument for the case $\eps=1/50$, this implies that $\sumalg(\database, k, \eps, \delta)$ can be used to losslessly encode an arbitrary
vector of length $\Omega(d \cdot v \cdot m ) = \Omega( \left(k d  \log(d/k)/\eps\right)$, and thus $|\sumalg(\database, k, \eps, \delta)| = \Omega\left( \left(k d  \log(d/k)/\eps\right)\right)$.
 This completes the proof of Theorem \ref{thm:hardlowerboundformal1}.
\end{proof}

%\vspace{-6mm}
\section{Lower Bounds Proofs for Itemset-Frequency-Estimator Sketches}
%%\vspace{-2mm}
\subsection{The For-All Case: Proof of Theorem \ref{thm:hardlowerboundformal2}} 
%%\vspace{-2mm}

\subsubsection{Informal Overview of the Proof}
For constant $k' \geq 2$, an $\tilde{\Omega}(d/\eps^2)$ lower bound on the size of \IFE\ sketches follows fairly directly 
from existing work in the literature on differential privacy (cf. Kasiviswanathan et al. \cite{difpriv4}; we refer to this work as KRSU). 
The idea of KRSU's result is the following. 
Itemset frequency queries are a linear class of queries, in the sense that we can represent any database as a vector $z$ (in which each entry of $z$ corresponds to a possible record in $\bits^d$ and its value is the number of such records in the database), and the vector of answers to all $k'$-itemset
frequency queries on $z$ can be written as $A z$ for some matrix $A$. Given a vector $y$ of approximate answers to these queries, on can try to reconstruct $z$ via the approximation $\hat{z} = A^{-1} y$, where $A^{-1}$ denotes the Moore-Penrose pseudo-inverse of $A$ (this is essentially reconstruction via $L_2$-distance minimization). If the matrix $A$ has a ``nice'' spectrum, then it is possible to bound the distance between $\hat{z}$ and $z$. If this distance is small enough, then any description of $y$ contains many bits of information, since it essentially encodes an entire database $z$. 

However, KRSU do not actually look at the matrix $A$ corresponding to $k'$-itemset frequency queries. Instead, they look at a matrix $M^{(k')}$ they define as follows. Consider a database $\database$ with $n$ rows and $k'$ columns, where the first $k'-1$ columns of $\database$ are generated at random. For any fixed setting of the first $k'-1$ columns, the vector of answers to $k'$-itemset frequency queries on $\database$ are a function only of column $k'$ of $\database$. Denoting column $k'$ of $\database$ by $x$, these answers can be written in the form $M^{(k')} x$, for a particular matrix $M^{(k')}$ derived from the first $k'-1$ columns of $\database$. 

KRSU show that $M^{(k')}$ behaves a lot like a matrix with truly random entries from $\{0, 1\}$, and hence has a ``nice'' spectrum (with high probability over the random choice of the first $k'-1$ columns of $\database$). This ensures that $\hat{x} = A^{-1} y$ is a ``good'' approximation to the last column $x$ of $\database$ as long as all answers in $y$ have error $\eps \lesssim \sqrt{n}$. Put another way, if the error in the answers is $\eps$, then it is possible to reconstruct column $k'$ of $\database$ as long as the number of rows is at most (roughly) $1/\eps^2$. 

This shows that one can use a summary providing $\eps$-approximate answers to
all $k'$-itemset frequency queries on a database with $k'$ columns and $1/\eps^2$ rows to reconstruct $\tilde{\Omega}(1/\eps^2)$ arbitrary bits.
It is possible to extend this argument to databases with $d+k'-1$ columns and $\tilde{\Omega}(1/\eps^2)$ rows,
yielding a $\tilde{\Omega}(d/\eps^2)$ lower bound on the size of \IFE\ sketches for such databases. 

Our main contribution for the \IFE\ problem is to combine such an $\tilde{\Omega}(d/\eps^2)$ lower bound for sketches for $k'$-way marginals with a technique for ``amplifying'' the lower bound to $\tilde{\Omega}(k \cdot \log(d/k) \cdot d/\eps^2)$ for $(k+k')$-way marginal queries (we used essentially the same amplification technique, which was inspired by work of Bun et al. \cite{buv} in the context of differential privacy, in Section \ref{sec:tightifi}). This technique says that, given $v=k \cdot \log(d/k)$ databases $\database'_1, ..., \database'_v$, each with $d+k'-1$ columns and $1/\eps^2$ rows, we can construct a bigger database $\database$ such that one can use $\eps$-approximate answers to all $(k+k')$-way marginal queries on $\database$ to obtain $\eps$-approximate answers to the $k'$-way marginals on every database $\database'_i$. 

This amplification technique actually requires the $\tilde{\Omega}(d/\eps^2)$ lower bound for $k'$-way marginals to hold even if the answer vector $y$ only has error $\eps$ ``on average'', rather than having error at most $\eps$ for every single answer. Hence, we cannot directly use the KRSU lower bound in our argument. In fact, to reconstruct a database from answers that have error at most $\eps$ only ``on average'', one cannot use $L_2$ distance minimization as in KRSU's lower bound argument, since $L_2$-minimization is highly sensitive to a few answers having large error. Fortunately, De \cite{de} shows how to use $L_1$-minimization to establish an $\tilde{\Omega}(1/\eps^2)$ lower bound even in the setting in which answers are only required to have error at most $\eps$ ``on average''. We use his techniques to obtain a lower bound suitable for our argument.

\subsubsection{Proof Details}
In the context of differential privacy, De~\cite{de}, building on~\cite{difpriv4,rudelson}, described an algorithm
for reconstructing a database $\database$, given sufficiently accurate answers to all $k$-itemset frequency queries on $\database$. 
In our terminology, De's result establishes that 
any \IFE-sketch can be used to losslessly encode $\Omega(\frac{d}{\eps^2\log_{(q)}(1/\eps)})$ bits of information.  Here $\log_{(q)}(\cdot)$ denotes the logarithm function iterated $q$ times.
Formally, we use the following slight refinement of De's result.  
\begin{lemma}[Variant of Theorem 5.12 of \cite{de}] \label{lemma:de} For any constant integers $k \geq 2$ and $q \geq 1$, there exists a constant $\gamma = \gamma(k, q) > 0$ and a distribution $\mu$ over $k$-itemset queries such that the following holds. 

Let $d$ and $\eps$ be parameters satisfying $1/\eps^2 \leq d^{k-1}/\log_{(q)}(1/\eps^2)$.
Suppose $\sumalg$ is any summary algorithm that can answer a $1-\gamma$ fraction of all $k$-itemset frequency 
queries under $\mu$ on databases with $d$ columns to error $\pm \eps$.  Then there exists a $b = b(d, \eps) = \Omega(d/\eps^2 \log_{(q)}(1/\eps))$,  an $n=n(d, \eps) = O(\log_{(q)}(d)/\eps^2)$, a database-generation algorithm $\alg$ that takes as input a Boolean vector $y \in \{0, 1\}^{b}$ and outputs a database $\alg(y) \in \left(\{0, 1\}^{d}\right)^n$, and a decoding algorithm $\alg'$ such that $\alg'$ outputs $y$ with high probability given $\sumalg(\alg(y))$.
%to losslessly encode $\Omega\left(d/\left(\eps^2 \log_{(q)}(1/\eps)\right)\right)$ arbitrary bits.  Hence, $\max_{\database \in \left(\{0, 1\}^{d}\right)^n} |\sumalg(\database)| = \Omega\left(d/\left(\eps^2 \log_{(q)}\left(1/\eps\right)\right)\right)$.
\end{lemma}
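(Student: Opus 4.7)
The plan is to follow De's proof of Theorem~5.12 in~\cite{de} with one small modification to accommodate the fact that only a $1-\gamma$ fraction of queries under $\mu$ are answered to error $\pm\eps$. De's reconstruction argument uses $\ell_1$-minimization (rather than the $\ell_2$-minimization of Kasiviswanathan et al.~\cite{difpriv4}), which is intrinsically robust to a constant fraction of adversarially corrupted answers, so the modification amounts to choosing $\gamma$ to be a sufficiently small constant.

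First, I would set up the database generator $\alg$, the distribution $\mu$, and the decoder $\alg'$. The database $\alg(y) \in (\bits^d)^n$ will have $n = O(\log_{(q)}(d)/\eps^2)$ rows. Following De, a designated block of $k-1$ ``filter'' columns is populated by a publicly fixed random string (independent of $y$), and the remaining $d - (k-1) = \Omega(d)$ ``free'' columns encode $y$ by placing, in each column, a constant-rate error-correcting codeword (e.g.~Justesen) of an $\Omega(n/\log_{(q)}(1/\eps))$-bit segment of $y$. Summing over all free columns this gives $b = \Omega(dn/\log_{(q)}(1/\eps)) = \Omega(d/(\eps^2 \log_{(q)}(1/\eps)))$ bits. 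The distribution $\mu$ is supported on $k$-itemsets formed by combining the $k-1$ filter attributes with a single free-column attribute chosen uniformly at random (together with any additional randomness in De's construction). Conditioned on the filter columns, the map from any single free column $x \in \bits^n$ to the vector of answers to the queries targeting $x$ is linear, $x \mapsto Mx$, for a random matrix $M$ determined by the filter bits.

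Next, I would import De's main technical contribution: via Rudelson's random matrix estimates~\cite{rudelson} as extended in~\cite{difpriv4,de}, for $1/\eps^2 \le d^{k-1}/\log_{(q)}(1/\eps^2)$, with high probability over the filter bits the matrix $M$ admits an $\ell_1$-minimization-based decoder $\mathcal{R}$ (essentially $\mathcal{R}(\hat{y}) = \arg\min_{z \in \{0,1\}^n} \|Mz - \hat{y}\|_1$) satisfying the following guarantee: whenever $\|e\|_1 \le c_2 \cdot m$ for an absolute constant $c_2$, where $m$ is the number of queries hitting column $x$, the recovered $\hat{x} = \mathcal{R}(Mx + e)$ agrees with $x$ on at least $(1 - c_1)n$ coordinates, for an absolute constant $c_1 < 1$. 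The key observation for our variant is that, because all itemset frequencies and summary outputs lie in $[0,1]$, the error vector $e$ arising from a $(1-\gamma)$-accurate summary under $\mu$ satisfies $\|e\|_1 \le (\eps + \gamma)m$: the $1-\gamma$ fraction of accurate answers each contribute at most $\eps$, while each of the remaining $\gamma m$ answers contributes at most $1$. Choosing $\gamma$ a sufficiently small constant (and using that $\eps$ is smaller than any fixed constant in the asymptotic regime of interest) keeps $\|e\|_1$ below De's reconstruction threshold.

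Finally, I would upgrade the per-column approximate recovery to exact recovery of each segment of $y$ via the error-correcting code, and union-bound over the $\Omega(d)$ free columns (the failure probability in De's spectral event is super-polynomially small, so the union bound is harmless). The main obstacle is the careful invocation of De's spectral lemma and associated $\ell_1$-reconstruction guarantee with the enlarged error tolerance $(\eps+\gamma)m$ rather than De's original $\eps m$; however, since the reconstruction bound degrades linearly in $\|e\|_1$, the passage from per-query accuracy to $(1-\gamma)$-average accuracy reduces to bookkeeping and a single parameter choice for $\gamma$, with no new random-matrix machinery beyond that already in~\cite{de} required.
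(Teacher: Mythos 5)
Your high-level plan tracks the paper's actual proof quite closely: the paper decomposes the argument exactly as you do, into De's $\ell_1$-reconstruction lemma (which degrades gracefully when a constant fraction of answers are wrong), Rudelson's random-matrix guarantees, and an error-correcting-code wrapper to boost approximate per-column recovery to exact recovery of $y$. Two details are off, and the second one is a real gap.

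First, the dimension split of the database is wrong. You write that the ``filter'' block consists of $k-1$ columns, leaving $d-(k-1)$ ``free'' columns. But in the Hadamard-product construction (De's Lemma 5.9, restated as the paper's Lemma~\ref{biglemma}) the filter part is $(k-1)$ \emph{blocks} of $d_0$ columns each, with $d_0 = \Theta(d)$, so that for each free column there are $d_0^{k-1}$ distinct $k$-itemsets (one attribute from each block plus the free attribute), and the resulting linear map $x \mapsto Mx$ is a $d_0^{k-1} \times n$ matrix. With only $k-1$ filter columns there is exactly one $k$-itemset per free column, $M$ degenerates to a $1 \times n$ row vector, and there is nothing to invert; the reconstruction cannot get off the ground. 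The paper instead has $d_2 = k d_0$ total columns with $(k-1) d_0$ filter and $d_0$ free; the free fraction is $\Omega(d)$ because $k$ is constant, not because $k-1$ is subtracted.

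Second, and more substantively, the per-column error-correcting code plus union bound does not handle the adversarial placement of the $\gamma$ fraction of inaccurate answers. Your ``key observation'' that $\|e\|_1 \le (\eps + \gamma) m$ for the error vector on a single free column implicitly assumes the inaccurate answers are spread evenly, but the $(1-\gamma)$ guarantee is only a global average over $\mu$: an adversarial summary can concentrate all of its budget of bad answers on a few free columns, making those columns entirely unrecoverable while still being $(1-\gamma)$-accurate under $\mu$. A union bound is of no help since this is not a low-probability event but an adversarial choice. The paper's Lemma~\ref{biglemma2} handles this by encoding the \emph{entire} payload $y'$ with a single constant-rate code (Justesen) spread across all $d_0$ free columns, and then uses a Markov argument to show that at most a $1\%$ fraction of free columns see more than a $100\gamma$ fraction of inaccurate answers; the bad columns contribute a bounded fraction of total symbol errors that the global code corrects. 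If each free column carries its own codeword, the small constant fraction of sacrificed columns is irrecoverable and the argument fails. So you need the global ECC together with the Markov localization step, not per-column ECC plus a union bound over the spectral event.
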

We prove Lemma \ref{lemma:de} in Appendix \ref{app:de}.  %We require De's result instead of those in~\cite{difpriv4} because it allows for successful decoding even if only a $1-\gamma$ fraction of $k$-itemset frequency queries (under $\mu$) are answered accurately, for some constant $\gamma > 0$.  

Lemma \ref{lemma:de} alone is enough to yield a lower bound of $\Omega\left(d/\eps^2 \log_{(q)}(1/\eps)\right)$ on the 
size of \IFE\ sketches capable of answering all $k$-itemset frequency queries to error $\pm \eps$, for any $k \geq 2$. 
The technical contribution of this section is to ``bootstrap'' this result to obtain a lower bound of $\Omega\left(k d \log(d/k)/\eps^2\log_{(q)}(1/\eps)\right)$
bits, which improves over the bound that follows from a direct application of Lemma \ref{lemma:de} even for $k=3$. This lower bound is essentially optimal, matching the $O(k d \log(d/k)/\eps^2)$ upper bound achieved
by algorithm \subsample\ up to a $\log_{(q)}(1/\eps)$ factor (for an arbitrarily large constant $q$). 

%\begin{theorem} \label{thm:hardlowerboundformal2}
%Fix any integer constants $c \geq 2$, $q \geq 1$, and let $\delta < 1$ be a constant.  Let $k \geq c+1$, let $d, \eps$ satisfy $1/\eps^2 \leq d^{c-1}/\log_{(q)}(1/\eps^2)$, and let $\delta < 1$ be a constant. Let $\sumalg$ be a \IFE\ sketching algorithm capable of answering all $k$-itemset frequency queries to error $\pm \eps$ on databases $\database \in \left(\{0, 1\}^d\right)^n$ for any $n > d \log(d/v) \log_{(q)}(1/\eps^2) / \eps^2$. Then $|\sumalg(n, d, k, \eps, \delta)| = \Omega\left(\frac{k d \log(d/k)}{\eps^2 \log_{(q)}(1/\eps)}\right).$
%\end{theorem}

%For illustration, consider fixing the constants $c = q = 10$.  Then theorem says that for every $k \geq 11$, if $1/\eps = O(d^{4.5}/\log_{(10)}(d))$, and the database size $n$ is sufficiently large, then there is a space lower bound of \\
%$|\mathcal{S}| = \Omega\left(kd\log(d/k)/\left(\eps^2 \log_{(10)}(1/\eps)\right)\right)$.

%Note that the requirement that $1/\eps^2 \leq d^{k-1}/\log_{(q)}(1/\eps^2)$ cannot be substantially weakened,as \releaseanswers\ outputs a summary of size ${d \choose k}\log(1/\eps)$. 

\begin{proof}[Proof of Theorem \ref{thm:hardlowerboundformal2}]
Let $v = (k-c)\log(d/(k-c))$ and $x_1, \dots, x_v \in \{0,1\}^d$ be the strings promised by Fact \ref{fact} applied with $k'=k-c$. 
Recall that for every vector $s \in \{0, 1\}^v$, there is a $(k-c)$-itemset $\itemset_s$ such that $f_{\itemset_s}(x_i)=s_i$ for
all $i \in [v]$.  Recall that $c \geq 2$ is a parameter of the theorem.

Let $\gamma = \gamma(c, q)$ be the constant in Lemma~\ref{lemma:de}.  Fix $\eps' = 100\eps / \gamma$.  Suppose that we are given $v$ strings $y_1, \dots, y_v \in \{0, 1\}^b$, where $b=b(d, \eps')$ is as in Lemma \ref{lemma:de} for $k = c$. Let $ \alg(y_i) = \database_i \in \left(\{0, 1\}^d\right)^n$, where $\alg$ is the database generation algorithm
promised by Lemma \ref{lemma:de}. We show how to construct a single ``large'' database $\database$ with $2d$ columns
and $n v$ rows such that 96\% of the bits of $y_1, \dots, y_v$ can be recovered from $\sumalg(\database, k, \eps, \delta)$.  Note that Lemma~\ref{lemma:de} applies, since $c \geq 2$.  Also note that $\gamma$ is indeed a constant since we required that $c$ and $q$ are constants; therefore $\eps' = O(\eps)$.

\medskip
\noindent \textbf{Definition of $\database$.} Recall that $\database_i(j)$ denotes the $j$th row of $\database_i$. Define $\database'_i$ to be the database with $2d$ columns and $n$ rows defined via $\database'_i(j) = (x_i, \database_i(j))$.
That is, $\database'_i$ is obtained from $\database_i$ by appending the string $x_i$ to the front of every row. We define
$\database$ to be the concatenation of all of the $\database'_i$ databases. We index the $v n$ rows of $\database$
as $(i, j): i \in [v], j \in [n]$. %Notice that for any itemset $\itemset \in [d]$, $\itemset$ is contained in row $(i, j)$ 
%of $\database$ if and only if $ 

\medskip
\noindent \textbf{Reconstructing $y_1, \dots, y_v$ from $\sumalg(\database)$.}  For any $c$-itemset query $\itemset \subseteq [d]$, let $z_{\itemset}$ denote the vector \linebreak $z_{\itemset}=(f_{\itemset}(\database_1), \dots, f_{\itemset}(\database_v))$. Let $s \in \{0, 1\}^v$ be any vector. Define $\itemset' = \itemset'(\itemset, s) \subseteq [2d]$ to be the $k$-itemset
whose indicator vector is the concatenation of the indicator vectors of $\itemset_s$ and $\itemset$; that is, 
$\itemset' := \itemset_s \cup \{j + d: j \in  \itemset\}$. 

We claim that $\frac1v \langle s, z_{\itemset} \rangle = f_{\itemset'(s, \itemset)}(\database)$. To see this, note that
$\itemset'$ is contained in row $(i, j)$ of $\database$ if and only if $s_i=1$ and $\itemset$ is contained in row $j$ of $\database_i$.
Hence, 
%\vspace{-2mm}
\begin{eqnarray}
\frac{1}{v} \langle s, z_{\itemset} \rangle  & = & \frac{1}{v} \cdot \sum_{i \in [v]: s_i=1} f_\itemset(\database_i)\\
& = & \frac{1}{v} \sum_{i \in [v]} \frac{1}{n} |\{j \in [n]:  \itemset \text{ is contained in } D_{i}(j)\}|\\
& = & \frac{1}{nv} \sum_{(i, j) \in [v] \times [n]} |\{\itemset' \text{ is contained in row } (i, j) \text{ of } \database \}|\\
& = &  f_{\itemset'(\itemset, s)}(\database).
\end{eqnarray}

Hence, from any \IFE\ sketch $\sumalg(\database, k, \eps, \delta)$, one can compute for every $c$-itemset $\itemset$, an estimate $\hat{f}_{\itemset'(\itemset, s)}$ 
satisfying $|\hat{f}_{\itemset'(\itemset, s)} - \frac1v \langle s, z_{\itemset}\rangle| \leq \eps$.  The following lemma
describes why these estimates are useful in reconstructing $y_1, \dots, y_v$.

\begin{lemma} \label{lemma:end}
Fix a $c$-itemset $\itemset \subset [d]$.
Given values $\hat{f}_{\itemset'(s, \itemset)}$
satisfying $|\hat{f}_{\itemset'(s, \itemset)} - \frac1v \langle s, z_{\itemset}\rangle| \leq \eps$ for all $s \in \{0, 1\}^v$, it is possible
to identify a vector $\hat{z}_{\itemset} \in [0, 1]^v$ satisfying
$\frac1v \|\hat{z}_{\itemset} - z_{\itemset}\|_1 \leq 4 \eps$.
\end{lemma}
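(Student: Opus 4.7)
The plan is to reconstruct $\hat{z}_\itemset$ by a simple feasibility/linear-programming argument, exploiting the fact that we are given (noisy) linear measurements $\langle s, z_\itemset\rangle/v$ for \emph{every} $s \in \{0,1\}^v$, not just a restricted set of them. This abundance of measurements makes $L_1$ minimization unnecessary; a mere consistency test suffices.

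First, I would define $\hat{z}_\itemset$ to be any vector in $[0,1]^v$ satisfying
\[
\bigl| \tfrac{1}{v} \langle s, \hat{z}_\itemset \rangle - \hat{f}_{\itemset'(s,\itemset)} \bigr| \le \eps \quad \text{for all } s \in \{0,1\}^v.
\]
Such a vector exists because $z_\itemset$ itself lies in $[0,1]^v$ (since each coordinate is a frequency) and satisfies these constraints by the hypothesis of the lemma. (One can find $\hat{z}_\itemset$ by solving a linear program with $2^v$ constraints, though we do not care about computational efficiency here.)

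Next, let $w = \hat{z}_\itemset - z_\itemset$. By the triangle inequality applied to the above display and the hypothesis, $|\langle s, w\rangle| \le 2\eps v$ for every $s \in \{0,1\}^v$. To convert this uniform control over $\ell_\infty$-style inner products into an $L_1$ bound, I would partition the coordinates into $S^+ = \{i : w_i \ge 0\}$ and $S^- = \{i : w_i < 0\}$, and let $s^+, s^- \in \{0,1\}^v$ denote their indicator vectors. Then
\[
\sum_{i \in S^+} w_i = \langle s^+, w\rangle \le 2\eps v, \qquad -\sum_{i \in S^-} w_i = -\langle s^-, w\rangle \le 2\eps v,
\]
so $\|w\|_1 = \sum_{i \in S^+} w_i - \sum_{i \in S^-} w_i \le 4\eps v$, yielding $\frac{1}{v}\|\hat{z}_\itemset - z_\itemset\|_1 \le 4\eps$, as required.

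There is not really a hard step here; the only subtlety is ensuring that $z_\itemset$ is itself feasible so that the consistency set is nonempty, and this is immediate from the hypothesis that the $\hat{f}_{\itemset'(s,\itemset)}$ are $\eps$-accurate estimates of $\frac{1}{v}\langle s, z_\itemset\rangle$. The real work in the surrounding argument lies not in this lemma but in bootstrapping it: applying Lemma~\ref{lemma:de} to each $\database_i$ using the $\hat{z}_\itemset$ as a $(1-\gamma)$-fraction-accurate answer vector (since an $L_1$ error of $4\eps$ across $v$ coordinates implies that at most a $\gamma$-fraction of coordinates have error exceeding $4\eps/\gamma = \eps'$), which then lets us decode each $y_i$ and obtain the claimed $\Omega(v \cdot b) = \Omega(k d \log(d/k)/(\eps^2 \log_{(q)}(1/\eps)))$ bit lower bound.
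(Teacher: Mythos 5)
Your proof is correct and follows essentially the same approach as the paper: you define $\hat{z}_\itemset$ by the same feasibility condition and then bound $\|\hat{z}_\itemset - z_\itemset\|_1$ using the same two test vectors (the indicators of the coordinates where the difference is nonnegative and where it is negative). The only cosmetic difference is that you argue directly while the paper phrases it as a contradiction.
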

\begin{proof}
Consider the algorithm that outputs any vector $\hat{z}_{\itemset} \in [0, 1]^v$ satisfying the following property:

\begin{equation} \label{eq:thefinish} \text{For all } s \in \{0, 1\}^v,  \left| \frac1v \langle \hat{z}_{\itemset} , s \rangle - \hat{f}_{\itemset'(s, \itemset)} \right| \leq \eps. \end{equation}

Note that at least one such vector always exists, because setting $\hat{z}_{\itemset} = z_{\itemset}$ satisfies Equation \eqref{eq:thefinish}. Thus, the algorithm always produces some output.

We claim that any $\hat{z}_{\itemset}$ output by the algorithm satisfies $\frac1v \|\hat{z}_{\itemset} - z_{\itemset}\|_1 \leq 4 \eps$.
Indeed, suppose otherwise. Define $s^{(1)} \in \{0, 1\}^v$ via $s^{(1)}_i = 1$ if and only if $\hat{z}_{\itemset, i} \geq z_{\itemset, i}$ and
$s^{(2)} \in \{0, 1\}^v$ via $s^{(2)}_i = 1$ if and only if $z_{\itemset, i} > \hat{z}_{\itemset, i}$. 
Then either $\frac{1}{v} \left( \langle \hat{z}_{\itemset}, s^{(1)} \rangle - \langle z_{\itemset}, s^{(1)} \rangle \right) > 2\eps$, or
$\frac{1}{v} \left( \langle \hat{z}_{\itemset}, s^{(2)} \rangle - \langle z_{\itemset}, s^{(2)} \right) > 2 \eps$. Assume without loss of generality 
that the former case holds.  Then 

$$ \left| \frac1v \langle \hat{z}_{\itemset} , s^{(1)} \rangle - \hat{f}_{\itemset'(s^{(1)}, \itemset)} \right| \geq 
\frac{1}{v} \left( \langle \hat{z}_{\itemset}, s^{(1)} \rangle - \langle z_{\itemset}, s^{(1)} \rangle \right) - \left| \frac1v \langle z_{\itemset} , s \rangle - \hat{f}_{\itemset'(s, \itemset)}\right|  > 2\eps - \eps = \eps,$$
where the inequality holds by the triangle inequality. But this contradicts the assumption that $\hat{z}_{\itemset}$ satisfies~\eqref{eq:thefinish}.
\end{proof}

%Lemma \ref{lemma:end} implies that, given any \IFE\ sketch $\sumalg(\database)$, 
%one can, for every $c$-itemset $\itemset$, recover a $\hat{z}_{\itemset} \in [0, 1]^v$ such that: 
%\begin{equation} \label{screwthis} \frac1v \|\hat{z}_{\itemset} - z_{\itemset}\|_1 \leq 4\eps.\end{equation}
For each $c$-itemset $\itemset$, let $\hat{z}_{\itemset}$ be as in Lemma \ref{lemma:end}. We think of $\hat{z}_{\itemset, i}$ as an estimate of $z_{\itemset, i} = f_{\itemset}(\database_i)$. Lemma \ref{lemma:end} guarantees that \emph{for any distribution $\mu$ over $c$-itemsets $\itemset$}, this estimate has error at most $4\eps$ on average, when the averaging is done over a random $c$-itemset $\itemset$ chosen according to $\mu$, and databases $\database_i$. In symbols:
%Since this holds for every $k$-itemset $\itemset$, averaging implies that:
\[\mathbf{E}_{\itemset \leftarrow \mu} \mathbf{E}_{i \in v} |\hat{z}_{\itemset, i} - z_{\itemset, i}| = \sum_{\itemset} \mu(\itemset)  \left(\mathbf{E}_{i \in v} |\hat{z}_{\itemset, i} - z_{\itemset, i}|\right) \leq \sum_{\itemset} \mu(\itemset) \cdot 4\eps \leq 4\eps.\]
Here, $\mathbf{E}_{\itemset \leftarrow \mu}$ denotes the expectation operation when $\itemset$
is chosen according to the distribution $\mu$, and the penultimate inequality holds by Lemma \ref{lemma:end}. 

By Markov's inequality we conclude that 
for at least 96\% of the databases $\database_i$, the estimates $\hat{z}_{\itemset, i}$ have error
at most $100\eps$ on average, where the averaging is over the choice of $\itemset$ according
to distribution $\mu$.  That is,
for at least $96\%$ of databases $\database_i$, it holds that 
%\vspace{-3mm}
\begin{equation} \label{eq:mercy} \mathbf{E}_{\itemset \leftarrow \mu}  \mathbf{E}_{i \in v} |\hat{z}_{\itemset, i} - z_{\itemset, i}| \leq 100\eps.\end{equation}

For any $\database_i$ satisfying Equation \eqref{eq:mercy} and any $\gamma > 0$, another application of Markov's inequality implies that the $|\hat{z}_{\itemset, i} - f_{\itemset}(\database_i)| \leq 100\eps/\gamma = \eps'$ for a $1-\gamma$ fraction of all $c$-itemsets $\itemset$ under distribution $\mu$. By Lemma \ref{lemma:de}, this implies that $y_i$ can be exactly recovered from the $\hat{z}_{\itemset, i}$ values, using algorithm $\alg'$. 

Since $96\%$ of the $y_i$ vectors can be \emph{exactly} recovered, it follows that
at least $96\%$ of the $b v$ total bits in the vectors of $y_1, \dots, y_v$ can be recovered. 
Suppose we let the $bv$ bits in the collection of vectors $(y_1, \dots, y_v)$ be the error-corrected encoding 
of a single vector $(y'_1, \dots, y'_z) \in \{0, 1\}^z$, using a code with constant rate that is uniquely decodable from $4\%$ errors (e.g. using a Justesen code \cite{justesen}). Then $z=\Omega(bv)$, and it follows from the above that $(y'_1, \dots, y'_z)$ can be \emph{exactly} reconstructed from
$\sumalg(\database, k, \eps, \delta)$ with probability $1-\delta$.
Basic information theory then implies that $$|\sumalg(\database, k, \eps, \delta)| = \Omega(bv) = \Omega\left(\frac{k d \log(d/k)}{(\eps')^2 \log_{(q)}(1/(\eps'))}\right) = \Omega\left(\frac{k d \log(d/k)}{\eps^2 \log_{(q)}(1/\eps)}\right),$$ where we have used the fact that $\eps' = O(\eps).$  This completes the proof of the theorem.
\end{proof}

%\vspace{-5mm}
\subsection{The For-Each Case: Proof of Theorem \ref{thm:hardlowerboundformal3}}
%\vspace{-2mm}
Recall that Theorem \ref{thm:hardlowerboundformal3} establishes   
a lower bound against \SIFE\ sketches that is tight up to a $\log_{(q)}(d)$ factor. We prove Theorem \ref{thm:hardlowerboundformal3}
via a simple argument that shows how to transform any \SIFE\ sketch into a \IFE\ sketch 
with a modest increase in space. This allows us to transform Theorem \ref{thm:hardlowerboundformal2} 
into the claimed lower bound against \SIFE\ sketches. 
%%\vspace{-2mm}
%\begin{theorem} \label{thm:hardlowerboundformal3}
%Fix any integer constants $c \geq 2$, $q \geq 1$.  Let $k \geq \max\{3, c+1\}$, let $d, \eps$ satisfy $1/\eps^2 \leq d^{c-1}/\log_{(q)}(1/\eps^2)$, and let $\delta < 1/2$ be a constant. Let $\sumalg$ be an \SIFE\ sketching algorithm capable of answering any (single) $k$-itemset frequency queries to error $\pm \eps$ on databases $\database \in \left(\{0, 1\}^d\right)^n$ for any $n > d \log(d/v) \log_{(q)}(1/\eps^2) / \eps^2$. Then $|\sumalg(n, d, k, \eps, \delta)| = \Omega\left(\frac{d}{\eps^2 \log_{(q)}(1/\eps)}\right).$
%\end{theorem}
\begin{proof}[Proof of Theorem \ref{thm:hardlowerboundformal3}]
Suppose that we are given an \SIFE\ sketching algorithm $\sumalg$ using space $|\sumalg|$. We show how to transform
$\sumalg$ into a \IFE\ sketching algorithm $\sumalg'$ using space $O\left(|\sumalg| \cdot \log{d \choose k}\right) = O\left(|\sumalg(k, \eps, \delta)| \cdot k \cdot \log(d/k)\right)$. It then follows from Theorem \ref{thm:hardlowerboundformal2} that  $|\sumalg| = \Omega\left(\frac{d}{\eps^2 \log_{(q)}(1/\eps)}\right)$. 

The \IFE\ sketching algorithm $\sumalg'$ simply outputs $10 \cdot \log\left({d \choose k}/\delta\right)$ independent copies of $\sumalg(\database)$ 
(i.e., using fresh randomness for each of the $10 \cdot \log\left({d \choose k}/\delta\right)$ runs of $\sumalg$).
Given any $k$-itemset $\itemset$, the query procedure $\recalg'$ associated with $\sumalg'$ simply runs the 
query procedure $\recalg$ associated with $\sumalg$ on each of the copies of $\sumalg(\database)$, and outputs the median
of the results. Since each copy of $\sumalg$ outputs an estimated frequency that has error at most $\eps$
with probability $1-\delta > 1/2$, standard Chernoff Bounds imply that for any fixed $k$-itemset $\itemset$,
the median estimate will have error at most $\eps$ with probability at least $1-\delta/{d \choose k}$. 
A union bound implies that the median estimate will have 
error at most $\eps$ for all ${d \choose k}$ itemsets with probability at least $1-\delta$. Thus,
$\sumalg'$ is a \IFE\ sketching algorithm with failure probability at most $\delta$.
\end{proof}

%\medskip
%\noindent \textbf{Remark}: The argument in the proof of Theorem \ref{thm:hardlowerboundformal3} can be applied to Theorem \ref{thm:hardlowerboundformal1} rather than Theorem \ref{thm:hardlowerboundformal2} to 
%establish an $\Omega(d/\eps)$ lower bound for \SIFI\ sketches when $(n, d, k, \eps, \delta)$ satisfy
%the conditions of Theorem \ref{thm:hardlowerboundformal1}. However, this result does not
%subsume the $\Omega(d/\eps)$ lower bound of Theorem \ref{thm:easylowerboundReviseds} for the \SIFI\ problem,
%as Theorem \ref{thm:easylowerboundReviseds} applies for a wider range of parameter regimes.

\eat{

given ``accurate'' answers to all $k$-itemset frequency queries on a database $\database$,
it is possible to reconstruct a large constant fraction of the entries of $\database$. 

To explain this result, we must introduce some notation

\begin{lemma}
For a parameter $\eps$, let $\database$ be a database with $n=1/\eps^2$ rows and $d_1$ columns chosen uniformly at random from $\left(\{0, 1\}^{d_1}\right)^{n}$. 
Let $y \in \{0, 1\}^n$ be any vector, and let $\database'$ be the database obtained by adding an extra column $y$ to $\database$, so that $\database'$ has $n$ rows and $d$ columns.

For any positive constants $q$ and $k_1$, there exists a constant $\gamma = \gamma(q, k_1) > 0$ such that the following holds.

Let $d, \eps$ be parameters such that $1/\eps^2 \leq d^{k_1-1} \cdot \log_{(q)}(1/\eps^2)$. 
There is a distribution $\mu$ over databases $\database \in \left(\{0, 1\}^d\right)^{1/\eps^2}$ and an algorithm $\alg$ with the following property.
With probability $1-\exp(-\Omega(d))$ over the random choice of $\database$, the following holds.
Let $y \in \{0, 1\}^n$ be any vector, and let $\database'$ be the database obtained by adding an extra column $y$ to $\database$, so that $\database'$ has $n$ rows and $d$ columns. 

Given a list of values $\hat{f}_{\itemset}

\end{lemma}

Our proof builds on a fundamental result of Kasiviswanathan et al. \cite{difpriv4} that establishes the following. For a parameter $\eps$, let $\database$ be a database with $n=1/\eps$ rows and $d-1$ columns chosen uniformly at random from $\left(\{0, 1\}^{d-1}\right)^{n}$. 
Let $y \in \{0, 1\}^n$ be any vector, and let $\database'$ be the database obtained by adding an extra column $y$ to $\database$, so that $\database'$ has $n$ rows and $d$ columns. %At a high level,  Kasiviswanathan et al. \cite{difpriv4} described an algorithm $\alg$ that takes as input $\database$, along with a list of ``noisy'' answers to all $k$-way marginal queries, and outputs an approximate reconstruction of $y$. 
Kasiviswanathan et al. \cite{difpriv4} described an algorithm $\alg$ with the following input-output behavior.

\medskip \noindent \textbf{Input to $\alg$}: $\database$, as well as a list of values $\{\hat{f}_\itemset: |\itemset|=k, \itemset \subseteq [d]$ satisfying $|\hat{f}_{\itemset} - f_{\itemset}(\database')| \leq 1/\sqrt{n} = 1/\eps$. 

\medskip \noindent \textbf{Output of $\alg$:} A vector $y' \in \{0, 1\}^n$.

\begin{lemma}[\cite{difpriv4}]
With high probability over the choice of $\database$, the following holds. For any vector $y \in \{0, 1\}^{n}$, 
$\alg$ will output a vector $y'$ of fractional Hamming distance $1-o(1)$ from $y$. 
\end{lemma} 

}

%\vspace{-5mm}
\section{Conclusion}
%\vspace{-2mm}
We introduced four closely related notions that capture the problem of approximating itemset frequencies in databases. For all four problems, we studied the minimal size of sketches that permit a user to recover sufficiently accurate information about itemset frequencies. After identifying three 
naive algorithms that apply to all four problems, we turned to proving sketch size lower bounds. Our results establish that random sampling achieves optimal or essentially optimal sketch size for all four problems. This stands in contrast to several seemingly similar problems, such as identifying
approximate frequent items in data streams, and various matrix approximation problems, for which uniform sampling is not an optimal sketching algorithm. 

 We proved our lower bounds by adapting and extending techniques developed in the literature on differentially privacy
 data analysis. It is an interesting open question whether there are other problems in non-private data analysis that can 
 be resolved using techniques from the literature on differential privacy. %In addition, the techniques we used to obtain optimal lower bounds
 %in such a wide range of parameter regimes were somewhat ad hoc, and required ``opening up'' the arguments in the differential privacy literature
 %and adapting them to our setting. It would be interesting if such optimal bounds could be obtained in a black box manner lower bounds
 %on the size of differentially private database summaries.
 
In addition, our lower bound arguments specify a ``hard'' distribution over databases, for which it is impossible to improve upon the space usage of the uniform sampling sketching algorithm for answering approximate itemset frequency queries. But real-world databases are likely to be substantially more structured than the databases appearing in our hard distribution, and real-world query loads are likely to be highly non-uniform. In these settings, \emph{importance sampling} is a natural candidate for improving upon the space usage of the uniform sampling sketching algorithm. It would be interesting to identify rigorous yet realistic conditions on databases and query loads that allow for such an improvement.
Subsequent work by Lang et al. \cite{lang} takes some initial steps in this direction.

\medskip
\noindent \textbf{Acknowledgements.} The authors are grateful to Amit Chakrabarti, Graham Cormode, Nikhil Srivastava, and Suresh Venkatasubramanian for several helpful conversations during the early stages of this work.

\bibliographystyle{alpha}
\bibliography{itemsetrefs}

\appendix

\section{Proof of Fact \ref{fact}}
\label{app:vc}
We provide a direct construction
of the vectors whose existence is guaranteed by Fact \ref{fact}, restated here for convenience. 

\medskip \noindent \textbf{Fact \ref{fact}.} \emph{
For any $k' \geq 1$, let $v= k'\cdot \log(d/k')$. There exist strings $x_1, \dots, x_v \in \{0,1\}^d$ such that for every string $s\in \{0,1\}^v$, there is a $k'$-itemset $\itemset_{s}$ such that $f_{\itemset_s}(x_i) = s_i$ for all $i \in [v]$.}

\begin{proof}[Proof of Fact \ref{fact}]
For expository purposes, we first describe a set of vectors $w_1, \dots, w_{k'} \in \{0,1\}^{k'}$ that are ``shattered'' by $k'$-itemset frequency queries, i.e.,
for every string $s\in \{0,1\}^{k'}$, there is a $k'$-itemset $\itemset_{s}$ such that $f_{\itemset_s}(y_i) = s_i$ for all $i \in [k']$.
We then describe a set of vectors $y_1, \dots, y_{\log d} \in \{0,1\}^d$ that are shattered even by $1$-itemset frequency queries. Finally, we explain
how to ``glue together'' the $w_i$'s and $y_i$'s to obtain the full set $x_1, \dots, x_v \in \{0,1\}^d$ whose existence is claimed in
the statement of Fact \ref{fact}.

\medskip \noindent \textbf{Description of the} $w_i$\textbf{'s}. For each $i \in [k']$, define $w_i \in \{0, 1\}^{k'}$ via:

\[
\begin{cases}
w_{i, j}=1: & 1 \leq j \leq k', j \neq i\\
w_{i,j}=0 & j = i\\
\end{cases}
\]

To restate the above in matrix notation, we define the $k' \times k'$ matrix $W^{(k')}$ whose rows are the $w_i$'s via:

\[W^{(k')} := \begin{pmatrix} w_1 \\  w_2 \\  \vdots \\  w_{k'-1} \\  w_{k'} \end{pmatrix} = \begin{pmatrix} 0 & 1 & 1 & \hdots & 1 & 1 \\ 1 & 0 & 1 & \hdots & 1 & 1 \\ \vdots & \vdots & \vdots & \hdots & \vdots & \vdots   \\ 1 & 1 & 1 & \hdots & 0 & 1  \\ 1 & 1 & 1 & \hdots & 1 &0 \end{pmatrix} \]

For any string $s \in \{0, 1\}^{k'}$, let $\itemset_s := \{i: s_i=0\}$. It is straightforward to check that $f_{\itemset_s}(w_i)=s_i$ as desired.

\medskip \noindent \textbf{Description of the} $y_i$\textbf{'s}. For each $i \in [\log d]$, we define each $y_i \in \{0 , 1\}^d$ to ensure that
the matrix whose rows are the $y_i$'s contains every possible $\log(d)$-bit string as a column. In matrix notation, we define the $\log(d) \times d$ matrix
$Y^{(d)}$ via:

\[Y^{(d)} := \begin{pmatrix} y_1 \\  y_2 \\  \vdots \\  y_{\log(d)-1} \\  y_{\log d} \end{pmatrix} = \begin{pmatrix} 0 & 0 & 0 & \hdots & 1 & 1 \\ 0 & 0 & 0 & \hdots & 1 & 1\\ \vdots & \vdots & \vdots & \hdots & \vdots & \vdots  \\ 0 & 0 & 1 & \hdots & 1 & 1 \\ 0 & 1 & 0 & \hdots & 0 & 1\end{pmatrix} \]

For any string $s \in \{0, 1\}^{\log d}$, we interpret $s$ as the binary representation of an integer $\text{int}(s) \in \{0, \dots, d-1\}$, and define $\itemset_s := \{\text{int}(s)\}$. It is straightforward to check that $f_{\itemset_s}(y_i)=s_i$ as desired. 

\medskip \noindent \textbf{Description of the} $x_i$\textbf{'s}. 
Recall that $v=k' \cdot \log(d/k')$. Consider the $v \times d$ matrix $X$ whose rows are the $x_i$'s. We view this matrix
as a collection of sub-matrices, where each sub-matrix has dimension $k' \times (d/k')$. 
More specifically, let $\bfJ$ denote the $k' \times (d/k')$ matrix of all-ones. %, and let
%$Y^{(d/k}}$ denote the matrix described in the previous paragraph, using $(d/k')$-dimensional vectors instead of
%$d$-dimensional vectors.
We define $X$ to be the matrix obtained from $W^{(d/k')}$ by replacing each entry of $W^{(d/k')}$ that is equal to $1$
with the matrix $\bfJ$, and replacing each entry of $W^{(d/k')}$ that is equal to 0 with the matrix $Y^{(d/k')}$. 
In more detail,
define:

\[X :=  \begin{pmatrix} x_1 \\  x_2 \\  \vdots \\  x_{k'-1} \\  x_{k'} \end{pmatrix} = \begin{pmatrix} Y^{(d/k')} & \bfJ & \bfJ & \hdots & \bfJ & \bfJ \\ \bfJ & Y^{(d/k')} & \bfJ & \hdots & \bfJ & \bfJ\\ \vdots & \vdots & \vdots & \hdots & \vdots & \vdots  \\ \bfJ & \bfJ & \bfJ & \hdots & Y^{(d/k')} & \bfJ \\ \bfJ & \bfJ & \bfJ & \hdots & \bfJ & Y^{(d/k')}\end{pmatrix} \]

 Given any vector $s \in \{0, 1\}^v$, we interpret $s$ as specifying $k'$ integers $\ell_1, \dots, \ell_{k'} \in \{0, \dots, d/k'\}$ in the natural way. 
 We view $[d]$ as the cross-product $[k'] \cdot [d/k']$, and associate each $j \in [d]$ with a pair $(r_1, r_2) \in [k'] \times [d/k']$ in the natural way.
 We then define $\itemset_s := \{ (i, \ell_i): i \in [k'] \}$. It is then straightforward to observe that
 $f_{\itemset_s}(x_i)=s_i$.
\end{proof}
\section{Proof of Lemma \ref{lemma:de}}
\label{app:de}

We restate Lemma \ref{lemma:de} for convenience, before providing its proof.

\medskip
\noindent \textbf{Lemma \ref{lemma:de}.} (Refinement of Theorem 5.12 of \cite{de}) \emph{For any constant integers $k \geq 2$ and $q \geq 1$, there exists a constant $\gamma = \gamma(k, q) > 0$ and a distribution $\mu$ over $k$-itemset queries such that the following holds.}

\emph{Let $d$ and $\eps$ be parameters satisfying $1/\eps^2 \leq d^{k-1}/\log_{(q)}(1/\eps^2)$.
Suppose $\sumalg$ is any summary algorithm that can answer a $1-\gamma$ fraction of all $k$-itemset frequency 
queries under $\mu$ on databases with $d$ columns to error $\pm \eps$.  Then there exists a $b = b(d, \eps) = \Omega(d/\eps^2 \log_{(q)}(1/\eps))$,  an $n=n(d, \eps) = O(\log_{(q)}(d)/\eps^2)$, a database-generation algorithm $\alg$ that takes as input a Boolean vector $y \in \{0, 1\}^{b}$ and outputs a database $\alg(y) \in \left(\{0, 1\}^{d}\right)^n$, and a decoding algorithm $\alg'$ such that $\alg'$ outputs $y$ with high probability given $\sumalg(\alg(y))$.}

As Lemma \ref{lemma:de} is a refinement of Theorem 5.12 of De's work \cite{de}, the presentation of our proof borrows heavily from De's.

\begin{proof}

We begin by defining the Hadamard product of matrices.

\begin{definition}[Hadamard product of matrices] Let $A_1, \dots, A_s \in \mathbb{R}^{\ell_i \times n}$.
Then, the Hadamard product of $A_1,  \dots , A_s$ is denoted by
$A = A_1 \circ A_2 \circ \dots \circ A_s \in \mathbb{R}^{L \times n}$, where $L =  \ell_1 \cdot \dots \cdot \ell_s$ and is defined as follows: 
Every row of $A$ is
identified with a unique element of $[\ell_1] \times \dots \times [\ell_s]$. For $i = (i_1, \dots i_s)$, define 
\[A[i, h] = \prod_{j=1}^s A_j[i_j, h]\]
where $A[i, h]$ represents the element in row $i$ and column $h$ of $A$.
\end{definition}

We will also require the definition of Euclidean sections, which play an important role
in the analysis of LP decoding algorithms. 

\begin{definition}[Euclidean Sections]
$V \subseteq \mathbb{R}^z$
is said to be a $(\delta, d', z)$ Euclidean Section if $V$ is a linear subspace of
dimension $d'$ and for every $x \in V$, the following holds:
\[\sqrt{z} \|x\|_2 \geq \|x\|_1 \geq \delta \sqrt{z} \|x\|_2.\]
A linear operator $A \colon \mathbb{R}^{d'} \rightarrow \mathbb{R}^{z}$
is said to be $\delta$-Euclidean if the range of $A$ is a Euclidean $(\delta, d', z)$
section.
\end{definition}

The following lemma follows directly from the proof of \cite[Lemma 5.9]{de}.

\begin{lemma}[Reformulation of Lemma 5.9 of \cite{de}] \label{biglemma} 
Let $d_0^{k-1} > n$. Suppose there exist 
Boolean matrices $A_1, \dots, A_{\ell-1} \in \mathbb{R}^{d_0 \times n}$ 
such that $A = A_1 \circ A_2 \circ \dots \circ A_{k-1}$, all the singular values of $A$ are at least $\sigma$,
and the range of $A$ is a $(\delta, n, d_0^{k-1}$)-Euclidean section. 

Let $\database_0$ denote the database with $n$ rows and $(k-1) \cdot d_0$ columns obtained
from the $A_i$'s as follows: the $j$th row of $\database_0$ is the concatenation of the $j$th row of each of the matrices
$A^T_1$, $A^T_2$, $\dots$, $A^T_{k-1}$. Let $\alg_1$ denote the database generation algorithm
that takes as input a Boolean vector $y \in \{0, 1\}^{n}$, and outputs the database $\database_1(y)$ with $n$ rows and $d_1:=(k-1)\cdot d_0 + 1$ 
columns obtained from $\database_0$ by appending an additional column equal to $y$. 

Then, there exists a constant 
$\gamma_1 = \gamma_1(\delta) > 0$, a distribution $\mu_1$ over $k$-itemsets $\itemset \subseteq [d_1]$, and a reconstruction algorithm $\mathcal{R}$ satisfying the following. Fix any $\zeta_1 \in o(\sqrt{n} \sigma /\sqrt{d_1^{k-1}})$. Suppose
$\mathcal{R}$ is given $\database_0$ and 
 approximate itemset frequencies $\hat{f}_{\itemset}$ for all $k$-itemsets $\itemset \subseteq [d_1]$.
 Let $S_1$ denote the set of all $k$-itemsets $\itemset$ satisfying
$n\cdot|\hat{f}_{\itemset} - f_{\itemset}(\database_2(y))| \leq \zeta_1$, and suppose that $\sum_{\itemset \in S_1} \mu(\itemset) \geq 1-\gamma_1$.
Then $\mathcal{R}$ outputs a vector $\hat{y}$ of Hamming distance $o(n)$ from $y$. 
 \end{lemma}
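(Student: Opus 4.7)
The strategy is to invoke Lemma \ref{biglemma} as a black box on many \emph{payload} columns of a single large database, and to wrap it with two layers of error-correcting codes in order to (a) upgrade biglemma's lossy (Hamming distance $o(n)$) reconstruction into exact decoding of an arbitrary $y \in \{0,1\}^b$, and (b) amortize the ``$(1-\gamma)$-fraction accurate'' guarantee over $\mu$ across many payload columns, so that the few columns poorly served by the sketch can be corrected as erasures.

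First, set $d_0 := \lfloor d/(2(k-1)) \rfloor = \Theta(d)$ and $n = O(\log_{(q)}(d)/\eps^2)$. I would appeal to the spectral/geometric analysis in De's paper to exhibit Boolean matrices $A_1,\dots,A_{k-1} \in \{0,1\}^{d_0 \times n}$ whose Hadamard product $A$ has minimum singular value $\sigma$ and $(\delta, n, d_0^{k-1})$-Euclidean range such that the error tolerance $\zeta_1 = \eps n$ required by Lemma \ref{biglemma} satisfies $\eps n \in o(\sqrt{n}\,\sigma/\sqrt{d_0^{k-1}})$. The appearance of $\log_{(q)}$ (rather than plain $\log$) in the size of $n$ stems from iteratively sharpening, $q$ times, the naive union-bound analysis of the Euclidean-section constant, and under the hypothesis $1/\eps^2 \leq d^{k-1}/\log_{(q)}(1/\eps^2)$ these parameters are mutually consistent. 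Let $\database_0$ be the structural database built from the $A_i$ as prescribed in Lemma \ref{biglemma}.

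Next, let $r := d - (k-1)d_0 = \Theta(d)$ be the number of free \emph{payload} columns. Given $y \in \{0,1\}^b$, the encoder $\alg$ applies an outer constant-rate Justesen code \cite{justesen} that corrects a constant fraction of erasures, splits the resulting codeword into $r$ blocks, and applies an inner constant-rate Justesen code to each block to produce a vector in $\{0,1\}^n$ that is uniquely decodable from $o(1)$-fraction Hamming errors; the $r$ inner codewords become the last $r$ columns of $\alg(y)$, and its first $(k-1)d_0$ columns are $\database_0$. For the distribution, let $\mu_1^{(j)}$ be the distribution promised by Lemma \ref{biglemma} when the $j$th payload column plays the role of the appended column, and define $\mu$ to first sample $j \in [r]$ uniformly and then an itemset from $\mu_1^{(j)}$. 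With $\gamma := \gamma_1^2/4$ (for $\gamma_1$ the constant from Lemma \ref{biglemma}), a $(1-\gamma)$-accurate sketch under $\mu$ and Markov's inequality imply that all but a constant fraction of indices $j$ are \emph{good}, meaning the sketch is $\eps$-accurate on a $(1-\gamma_1)$-fraction of $\mu_1^{(j)}$-queries; Lemma \ref{biglemma} then recovers the $j$th payload column to Hamming distance $o(n)$ for each good $j$. The decoder $\alg'$ inner-decodes to recover each such block exactly, treats the bad columns as erasures, and outer-decodes to recover $y$. A final bookkeeping check confirms $b = \Omega(rn) = \Omega(d \log_{(q)}(d)/\eps^2)$, which exceeds $\Omega(d/(\eps^2 \log_{(q)}(1/\eps)))$ under the stated hypothesis, and $n = O(\log_{(q)}(d)/\eps^2)$ as claimed.

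The principal obstacle is the spectral/geometric analysis in the first step: proving that Hadamard products of small Boolean matrices yield a linear map with the required Euclidean-section constant and minimum singular value for $n$ as small as $O(\log_{(q)}(d)/\eps^2)$. This iterated-logarithm sharpening is De's core technical contribution and I would import it as a black box via Lemma \ref{biglemma}. The remaining work --- two-layer coding and the Markov-based amortization across payload columns --- is more mechanical, but the choices of $\gamma = \gamma_1^2/4$ and of inner/outer code rates must be tuned so that the final constant $\gamma$ in the theorem depends only on $k$ and $q$.
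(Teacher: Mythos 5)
Your proposal does not prove the statement at hand: it \emph{assumes} Lemma \ref{biglemma} as a black box and then builds the downstream result (Lemma \ref{lemma:de}) on top of it. The target statement here \emph{is} Lemma \ref{biglemma} itself --- the claim that, given the structural database $\database_0$ built from the Hadamard-product matrices $A_1,\dots,A_{k-1}$ (whose product $A$ has all singular values at least $\sigma$ and whose range is a $(\delta,n,d_0^{k-1})$-Euclidean section), plus frequency estimates accurate to $\zeta_1/n$ on a $1-\gamma_1$ fraction of $k$-itemsets under $\mu_1$, one can reconstruct the single appended column $y\in\{0,1\}^n$ to Hamming distance $o(n)$. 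Your outline never engages with that claim: the payload-column splitting, the two layers of Justesen codes, the Markov amortization of the $(1-\gamma)$ accuracy guarantee across columns, and the final count $b=\Omega(d\log_{(q)}(d)/\eps^2)$ are all steps of the paper's Appendix~\ref{app:de} argument for Lemma \ref{lemma:de} (via Lemmas \ref{biglemma2}, \ref{rudelson}, and \ref{dieforreal}), for which your reasoning is essentially correct --- but they take the reconstruction guarantee for a single column as given rather than establishing it.

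What a proof of Lemma \ref{biglemma} actually requires is the LP-decoding core: one observes that for each row index $i=(i_1,\dots,i_{k-1})$ of $A$, the $k$-itemset consisting of the appended column together with attribute $i_j$ from the $j$th block has frequency exactly $(Ay)_i/n$ in $\database_1(y)$, so the approximate frequencies furnish a noisy view of $Ay$ in which a $1-\gamma_1$ fraction of coordinates (under $\mu_1$) have error at most $\zeta_1$ and the rest are arbitrary. One then runs $\ell_1$-minimization (find $\hat y\in[0,1]^n$ minimizing $\|A\hat y - n\hat f\|_1$), uses the Euclidean section property of $\mathrm{range}(A)$ to show that a small fraction of grossly corrupted coordinates cannot pull the minimizer far in $\ell_1$ (this is what fixes $\gamma_1=\gamma_1(\delta)$), and finally uses the singular value lower bound $\sigma$ together with $\zeta_1=o(\sqrt{n}\,\sigma/\sqrt{d_1^{k-1}})$ to convert the resulting $\ell_2$ bound on $A(\hat y - y)$ into $\|\hat y - y\|_2^2 = o(n)$, hence Hamming distance $o(n)$ after rounding. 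None of this appears in your proposal. (For calibration: the paper itself does not reprove this step either --- it states that the lemma ``follows directly from the proof of [Lemma 5.9]'' of De \cite{de} --- but a proof attempt for this statement must at minimum reconstruct that argument rather than cite the statement being proved.)
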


We use Lemma \ref{biglemma} to establish the following stronger statement. 

\begin{lemma}\label{biglemma2} Let $A_1, \dots, A_{k-1}$, $\delta$, $\database_0$, $\mu_1$, $d_0$, $d_1$, $n$, $\gamma_1$, and $\zeta_1$ be as in Lemma \ref{biglemma}.
There is a $b=b(d_0, n) \in \Omega(d_0 \cdot n)$ and database generation algorithm $\alg_2$ that takes as input a Boolean vector $y' \in \{0, 1\}^{b}$,
and outputs a database $\database_2$ with $n$ rows and $d_2:=(k-1)\cdot d_0 + d_0 = k \cdot d_0$ 
columns such that the following holds.

There exists a constant 
$\gamma_2 = \gamma_2(\delta) > 0$,  a distribution $\mu_2$ over $k$-itemsets $\itemset \subseteq [d_2]$, and a reconstruction algorithm $\mathcal{R}_2$ satisfying the following. Suppose
$\mathcal{R}_2$ is given $\database_0$ and 
 approximate itemset frequencies $\hat{f}_{\itemset}$ for all $k$-itemsets $\itemset \subseteq [d_2]$. 
 Let $S_2$ denote the set of all $k$-itemsets $\itemset$ satisfying 
$n\cdot|\hat{f}_{\itemset} - f_{\itemset}(\database_2(y'))|  \leq \zeta_1$, and suppose that $\sum_{\itemset \in S_2} \mu_2(\itemset) \geq 1-\gamma_2$.
 Then $\mathcal{R}_2$ outputs $y'$. 
 \end{lemma}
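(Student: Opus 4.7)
The plan is to reduce Lemma \ref{biglemma2} to Lemma \ref{biglemma} by treating each of the $d_0$ appended columns as an independent single-column instance of Lemma \ref{biglemma}, and combining the resulting approximate reconstructions via an error-correcting code.

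First I would set up the database generator $\alg_2$. Fix a constant-rate binary code $\mathrm{Enc}:\{0,1\}^b \to \{0,1\}^{d_0 n}$ uniquely decodable from some constant fraction $\eta > 0$ of errors (for instance a Justesen code \cite{justesen}), so that $b = \Omega(d_0 n)$. Given input $y' \in \{0,1\}^b$, parse $\mathrm{Enc}(y') = y_1 \| y_2 \| \cdots \| y_{d_0}$ with each $y_j \in \{0,1\}^n$, and output $\database_2(y') := [\database_0 \mid y_1 \mid y_2 \mid \cdots \mid y_{d_0}]$.

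For the distribution $\mu_2$, let $\mu_1^{(j)}$ denote the distribution $\mu_1$ of Lemma \ref{biglemma} translated so that its support lies on $k$-itemsets that include column $(k-1)d_0 + j$ of $\database_2$ (the $j$-th new column). Set $\mu_2 := \tfrac{1}{d_0}\sum_{j=1}^{d_0} \mu_1^{(j)}$ and $\gamma_2 := \gamma_1 \eta / 4$. The key observation is that any $k$-itemset $\itemset$ involving the $j$-th new column satisfies $f_\itemset(\database_2(y')) = f_\itemset(\database_1(y_j))$, since the other new columns are irrelevant to such a query. Consequently, for $\itemset$ in the support of $\mu_1^{(j)}$, membership in $S_2$ coincides with being a ``good'' itemset for the single-column instance on $y_j$. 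Writing $p_j := \mu_1^{(j)}(\itemset \notin S_2)$, the hypothesis $\mu_2(S_2) \geq 1-\gamma_2$ gives $\tfrac{1}{d_0}\sum_j p_j \leq \gamma_2$, so Markov's inequality yields $|\{j : p_j > \gamma_1\}| \leq (\eta/4) d_0$. The reconstruction $\mathcal{R}_2$ then simply runs Lemma \ref{biglemma}'s algorithm $\mathcal{R}$ separately on each of the $d_0$ subproblems (using the given $\database_0$ and the restricted frequency estimates), obtaining candidates $\hat{y}_1,\ldots,\hat{y}_{d_0}$. For each of the at least $(1-\eta/4) d_0$ good columns, $\mathcal{R}$ returns a $\hat{y}_j$ of Hamming distance $o(n)$ from $y_j$; bad columns contribute at most $n$ errors each. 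Thus $\hat{y}_1\|\cdots\|\hat{y}_{d_0}$ lies within Hamming distance $(\eta/4) d_0 n + o(d_0 n) < \eta \cdot d_0 n$ of $\mathrm{Enc}(y')$ once $d_0 n$ is sufficiently large, and ECC decoding recovers $y'$ exactly.

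The main obstacle I anticipate is verifying that $\mu_1$ can legitimately be taken to be supported on itemsets that include the extra column of $\database_1$, so that the per-column distributions $\mu_1^{(j)}$ are well-defined and genuinely correspond to single-column instances; if $\mu_1$ puts mass on itemsets not involving column $d_1$, one must either inspect the construction in the proof of Lemma \ref{biglemma} to confirm the support structure, or pass to the conditional distribution on itemsets containing column $d_1$ and absorb the resulting constant into $\gamma_1$. All remaining steps---ECC properties, the Markov averaging, and the fact that the \emph{same} $\database_0$ witnesses the singular-value/Euclidean-section hypotheses of Lemma \ref{biglemma} for all $d_0$ subproblems simultaneously---are routine.
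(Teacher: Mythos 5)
Your proposal matches the paper's proof essentially step for step: encode $y'$ with a constant-rate uniquely-decodable code into $d_0$ blocks of length $n$, append each block as a new column to $\database_0$, define $\mu_2$ as the uniform mixture of $d_0$ translated copies of $\mu_1$ (supported on the ``special'' itemsets containing exactly one new column), use Markov's inequality to show all but a small constant fraction of blocks inherit the $1-\gamma_1$ accuracy guarantee, run $\mathcal{R}_1$ on each block, and decode the error-correcting code. The concern you flag about $\mu_1$'s support is legitimate and is handled implicitly in the paper (the itemsets relevant to De's LP-decoding reconstruction all contain the extra column $d_1$, which is precisely why $h$ and the $g_i$'s are well-defined inverses on special itemsets).
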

 \begin{proof}
Let $\alg_2$ be the database generation algorithm
that takes as input a Boolean vector $y' \in \{0, 1\}^{b}$, and first replaces $y'$ with an error-corrected encoding $\text{Enc}(y') \in \{0, 1\}^{d_0 \cdot n}$ of
$y'$, using an error-correcting code of constant rate that is uniquely decodable from $2\%$ errors. 
$\alg_2$ then outputs the database $\database_2(y')$ with $n$ rows and $d_2$ 
columns obtained from $\database_0$ by appending $d_0$ additional columns, with the first additional column equal to the first $n$ bits of $\text{Enc}(y')$,
the second additional column equal to the second $n$ bits of $\text{Enc}(y')$, and so on. We refer to the $d_0$ attributes corresponding to these additional columns as \emph{special attributes}.
Similarly, we call an itemset $\itemset \subseteq [d_2]$ \emph{special} if $\itemset$ contains exactly one special attribute.
 
  For each $i \in [d_0]$, let $y^{(i)} \in \{0, 1\}^{n}$ denote the vector $(\text{Enc}(y')_{(i-1) \cdot n+1}, \dots, \text{Enc}(y')_{i \cdot n})$;
  that is, $y^{(i)}$ is the $i$th ``block'' of $n$ bits from $\text{Enc}(y')$. 
 Let  $\database_1(y^{(i)})$ be as in the statement of Lemma \ref{biglemma}. Note that
 $\database_1(y^{(i)})$ is a sub-database of $\database_2(y')$, in the sense that $\database_1(y^{(i)})$
 equals $\database_2(y')$ with several columns removed. Hence, for any 
 $k$-itemset $\itemset_1 \subseteq [d_1]$, there is a unique itemset $g_i(\itemset_1) \subseteq [d_2]$
 such that $f_{\itemset_1}(\database_1(y^{(i)})) = f_{g_i(\itemset_1)}(\database_2(y'))$. Notice
 that $g_i(\itemset_1)$ is a special itemset, for any $i$ and $\itemset_1$. Moreover, the $g_i$'s are all invertible: 
 for any special itemset $\itemset_2 \subseteq [d_2]$,
 there is a unique itemset $h(\itemset_2) \subseteq [d_1]$ and a unique $i$ satisfying $g_i(h(\itemset_2))=\itemset_2$. 
   
 Let $\gamma_2 = \gamma/100$. We define the distribution $\mu_2$ over $k$-itemsets $\itemset \subseteq [d_2]$ as follows. 
\[\mu_2(\itemset) = 
\begin{cases} 0 & \text{ if } \itemset \text{ is not special.} \\
(1/d_0) \cdot \mu_1(h(\itemset)) & \text{ if } \itemset \text{ is special.}
\end{cases} \]
  
As per the hypothesis of the lemma, suppose
$\mathcal{R}_2$ is given $\database_0$ and 
 approximate itemset frequencies $\hat{f}_{\itemset}$ for all $k$-itemsets $\itemset \subseteq [d_2]$. 
 Let $S_2$ denote the set of all $k$-itemsets $\itemset$ satisfying 
$n \cdot |\hat{f}_{\itemset} - f_{\itemset}(\database_2(y'))|  \leq \zeta_1$, and suppose that $\sum_{\itemset \in S_2} \mu_2(\itemset) \geq 1-\gamma_2$.
 
 The recovery algorithm $\mathcal{R}_2$ will reconstruct $y'$ by first constructing
 a vector $y'' \in \{0, 1\}^{d_0 \cdot n}$ such that the fractional Hamming distance between $y''$ and $\Enc(y')$ is at most .02,
 and then running the decoding algorithm for the error-correcting code on $y''$.
 $\mathcal{R}_2$ constructs the vector $y''$ as follows. For each $i \in [d_0]$,
 $\mathcal{R}_2$ constructs the $i$'th block of $n$ bits of $y''$ by simulating $\mathcal{R}_1$ on $\database_1(y^{(i)})$ in the natural way:
 whenever $\mathcal{R}_1$ 
 requests a value $\hat{f}_{\itemset_1}$, $\mathcal{R}_2$
 returns the value $f_{g_i(\itemset_1)}$. $\mathcal{R}_2$ then sets 
 $(y''_{(i-1) \cdot n + 1}, \dots, y''_{i \cdot n})$ to the vector $\hat{y}$ output by $\mathcal{R}_1$.

\medskip
\noindent \textbf{Showing $y''$ is close to} $\text{Enc}(y')$ \textbf{in Hamming distance.}
For each special attribute $i$, let $S_{1, i}$ denote the set of itemsets $\itemset \subseteq [d_2]$ in $\text{Range}(g_i)$ satisfying 
$n \cdot |\hat{f}_{\itemset} - f_{\itemset}(\database_2(y'))| \leq \zeta_1.$ 
Since a $1-\gamma_2$ fraction of the estimates $\hat{f}_{\itemset}$ under $\mu_2$ satisfy 
$n \cdot |\hat{f}_{\itemset} - f_{\itemset}(\database_2(y'))| \leq \zeta_1$,
Markov's inequality implies that 99\% of the $i$'s satisfy $\sum_{\itemset \in S_{1, i}} \mu_2(\itemset)/d_0 \geq 1-100\gamma_2 = 1-\gamma_1$.
Lemma \ref{biglemma} implies that for each such $i$, the $i$th block output by $\mathcal{R}_2$, namely
 $(y''_{(i-1) \cdot n + 1}, \dots, y''_{i \cdot n})$, will have Hamming distance $o(n)$ from the $i$th block of $\Enc(y')$. 
 Hence, $y''$ has fractional Hamming distance at most $.01 + o(1) \leq .02$ from $\Enc(y')$. 
 \end{proof}
 
Rudelson \cite{rudelson} proved the existence of matrices $A_1, \dots, A_{k-1}$ satisfying the conditions of Lemmas \ref{biglemma} and \ref{biglemma2}. 

\begin{lemma}[Rudelson \cite{rudelson}, see also Theorem 5.11 of \cite{de}]
\label{rudelson}
Let $q, k$ be constants. Also, let $\nu \sim \mathbb{R}^{d' \times n}$
be a distribution over
matrices such that every entry of the matrix is an independent and unbiased $\{0, 1\}$ random variable.
Let $A_1,\dots, A_{k-1}$ be i.i.d. copies of random matrices drawn from the distribution $\nu$ and $A$ be the
Hadamard product of $A_1, \dots, A_{k-1}$. Then, provided that $d^{k-1} = o(n \log_{(q)}(n))$,
with probability $1-o(1)$,
the smallest singular value of $A$, denoted by $\sigma_n(A)$, satisfies
$\sigma_n(A) = \Omega\left(\sqrt{d^{k-1}}\right)$.
Also, the range of $A$ is a $(\gamma(q, \ell), n, d^{k-1})$ Euclidean section for some $\gamma(q, \ell) > 0$.
\end{lemma}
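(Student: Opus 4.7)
The plan is to reduce the smallest singular value bound to a uniform lower bound on $\|Av\|_2$ over the unit sphere in $\reals^n$ via an $\eps$-net argument, with the key ingredient being sharp deviation estimates for polynomial chaoses of degree $k-1$ in independent Bernoulli entries. I would first write $A_j = \tfrac{1}{2}\mathbf{1}\mathbf{1}^\top + B_j$, where $B_j$ has i.i.d. entries uniform on $\{-1/2, 1/2\}$; this centers the entries and expresses each coordinate $(Av)_i = \sum_h v_h \prod_j A_j[i_j, h]$ as a sum of multilinear polynomials in the centered variables $B_j[i_j, h]$ whose degrees range from $0$ to $k-1$. For a unit vector $v$ orthogonal to the all-ones vector $\mathbf{1}$, the lower-degree contributions vanish in expectation and a direct moment computation gives $\E\|Av\|_2^2 = \Theta(d^{k-1})$; the one-dimensional direction along $\mathbf{1}$, where $A$ is dominated by its rank-one mean, can be handled separately by an elementary Chernoff estimate.

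The core technical step is to obtain, for each fixed unit vector $v$ orthogonal to $\mathbf{1}$, a deviation inequality of the form
\[
\Pr\!\left[\,\bigl|\|Av\|_2^2 - \E\|Av\|_2^2\bigr| \geq t\, d^{k-1}\right] \;\leq\; \exp\!\left(-c(k)\, n / \log_{(q)}(n)\right),
\]
which is strong enough to union-bound over an $\eps$-net of size $\exp(O(n))$ on the unit sphere. Since $\|Av\|_2^2$ is a polynomial of degree $2(k-1)$ in independent bounded random variables, I would obtain this tail bound via hypercontractivity (Bonami--Beckner) combined with Lata\l a-type moment inequalities for polynomial chaoses; a standard net-approximation step then converts the bound on the net into a uniform bound on the whole sphere, delivering $\sigma_n(A) = \Omega(\sqrt{d^{k-1}})$. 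For the Euclidean section property I would argue, again via a union bound over a slightly finer net, that with high probability every vector $Av$ is spread: a Paley--Zygmund second-moment estimate applied coordinatewise to the polynomials $(Av)_i$ shows that a constant fraction of the coordinates have magnitude $\Omega\bigl(\sqrt{\E(Av)_i^2}\bigr)$, which yields $\|Av\|_1 \geq \gamma \sqrt{d^{k-1}}\,\|Av\|_2$ as required.

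The hardest part will be obtaining the deviation inequality above with sufficient sharpness to afford an $\exp(\Omega(n))$-size net when $d^{k-1}$ is only barely larger than $n$. A naive Bernstein- or Hanson--Wright-type bound applied to the degree-$2(k-1)$ polynomial $\|Av\|_2^2 - \E\|Av\|_2^2$ loses polynomial factors in $n$, which would force $d^{k-1} \gg n$ rather than the essentially tight regime $d^{k-1} = o(n \log_{(q)}(n))$. Only an \emph{iterated} hypercontractive argument, which effectively trades one logarithmic factor per level of the $(k-1)$-fold tensor structure, can deliver the required sub-Gaussian-like tails up to the $\log_{(q)}(n)$ slack in the hypothesis. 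This is the crux of Rudelson's original argument in \cite{rudelson}, and I would follow his approach directly, with the presentation in \cite[Theorem 5.11]{de} providing a convenient template for fitting the constants into the framework of Lemmas \ref{biglemma} and \ref{biglemma2}.
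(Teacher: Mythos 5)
The paper does not supply a proof of this lemma at all: it is explicitly cited as an external result of Rudelson \cite{rudelson} (see also Theorem~5.11 of \cite{de}) and invoked as a black box; the only surrounding discussion in the paper is the single sentence ``Rudelson \cite{rudelson} proved the existence of matrices $A_1,\dots,A_{k-1}$ satisfying the conditions of Lemmas \ref{biglemma} and \ref{biglemma2}.'' So your sketch is not comparable to a proof in the paper so much as an attempt to reconstruct Rudelson's argument, which is a genuinely harder thing to do.

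Your high-level picture --- center each factor as $A_j = \tfrac12 \mathbf{1}\mathbf{1}^\top + B_j$, expand $(Av)_i$ into multilinear chaoses of degrees $0$ through $k-1$, kill the constant term by restricting to $v \perp \mathbf{1}$, compute $\E\|Av\|_2^2 = \Theta(d^{k-1})$, union-bound a pointwise deviation estimate over a net on the sphere, and handle the $\ell_1/\ell_2$ comparison (Euclidean section) via a spread-ness/Paley--Zygmund argument --- is a reasonable outline of the standard net-plus-chaos-concentration route. You also correctly identify that naive Hanson--Wright/Bernstein for a degree-$2(k-1)$ polynomial is too weak, and that the logarithmic slack in the hypothesis is exactly what has to be traded against the polynomial-chaos tail.

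There is, however, a concrete quantitative gap in the central step as you have written it. You assert a pointwise deviation bound of the form $\Pr\bigl[\,|\|Av\|_2^2 - \E\|Av\|_2^2| \ge t\, d^{k-1}\bigr] \le \exp\bigl(-c(k)\, n/\log_{(q)}(n)\bigr)$ and claim this ``is strong enough to union-bound over an $\eps$-net of size $\exp(O(n))$.'' It is not: since $\log_{(q)}(n) \to \infty$, the exponent $n/\log_{(q)}(n)$ is $o(n)$, so the product $\exp(O(n)) \cdot \exp(-c(k)\, n/\log_{(q)}(n))$ diverges. A net over the unit sphere in $\reals^n$ at any constant mesh has cardinality $\exp(\Theta(n))$; to beat it one needs a pointwise failure probability of $\exp(-\Omega(n))$, not $\exp(-o(n))$. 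This is precisely the hard part of the lemma: degree-$(k-1)$ Rademacher chaos hypercontractivity only gives tails of the rough form $\exp(-c\,t^{2/(k-1)})$, which does not yield $\exp(-\Omega(n))$ in the regime where $d^{k-1}$ is only slightly larger than $n$. Either the exponent in your deviation inequality needs to improve (and you would need to say \emph{how} the ``iterated hypercontractive'' step produces that), or the argument must be restructured --- e.g., via a compressible/incompressible decomposition of the sphere in the style of Rudelson--Vershynin, or via small-ball anti-concentration rather than large-deviation concentration --- so that the effective net is much smaller. As written, the union bound does not close, and this is not a constant-chasing issue but the crux you yourself flagged.

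One smaller remark on the decomposition: after centering, the intermediate-degree terms $\bigl(\tfrac12\bigr)^{k-1-|S|}\sum_h v_h \prod_{j\in S} B_j[i_j,h]$ for $0 < |S| < k-1$ do not drop out for $v \perp \mathbf{1}$; only the $S=\emptyset$ term does. Each nonempty $S$ contributes $\Theta(d^{k-1})$ to $\E\|Av\|_2^2$, so the signal is spread across all degrees $1,\dots,k-1$, and the concentration argument must handle the full mixture of chaoses, not just the top-degree part. Your phrasing (``the lower-degree contributions vanish in expectation'') is literally true but may obscure that these terms dominate the second moment along with the top-degree term and must be controlled in the tail bound as well.

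Finally, the hypothesis in the statement of the lemma as printed in the paper, $d^{k-1} = o(n\log_{(q)}(n))$, is inconsistent with the later Lemma \ref{dieforreal} which requires $n\log_{(q)}(n) < d_0^{k-1}$; the intended direction is $d^{k-1} \ge n\log_{(q)}(n)$ (a \emph{lower} bound on $d^{k-1}$), matching the setting $d_0^{k-1} > n$ of Lemma \ref{biglemma}, and your remarks about ``when $d^{k-1}$ is only barely larger than $n$'' tacitly use the correct reading.
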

 
 Combining Lemmas \ref{rudelson} and Lemma \ref{biglemma2}, we obtain the following lemma.
 
 \begin{lemma} \label{dieforreal}
For any positive constants $q, k$, and any pair $d_0, n > 0$ satisfying $n\log_{(q)}(n) < d_0^{k-1}$, there is a $b=b(d_0, n) \in \Omega(d_0 \cdot n)$ and database generation algorithm $\alg_2$ that takes as input a Boolean vector $y \in \{0, 1\}^{b}$,
and outputs a database $\database_2$ with $n$ rows and $d_2:=(k-1)\cdot d_0 + d_0 = k \cdot d_0$ 
columns such that the following holds.

There exists a constant 
$\gamma_2 = \gamma_2(k, q) > 0$, a distribution $\mu_2$ over $k$-itemsets $\itemset \subseteq [d_2]$, and a reconstruction algorithm $\mathcal{R}_2$ satisfying the following. Let $\zeta_1 = \sqrt{n}/\log_{(q+1)}(n)$. Suppose
$\mathcal{R}_2$ is given $\database_0$ and 
 approximate itemset frequencies $\hat{f}_{\itemset}$ for all $k$-itemsets $\itemset \subseteq [d_2]$. 
 Let $S_2$ denote the set of all $k$-itemsets $\itemset$ satisfying 
$n \cdot |\hat{f}_{\itemset} - f_{\itemset}(\database_2(y'))|  \leq \zeta_1$, and suppose that $\sum_{\itemset \in S_2} \mu_2(\itemset) \geq 1-\gamma_2$.
 Then $\mathcal{R}_2$ outputs $y'$. 
\end{lemma}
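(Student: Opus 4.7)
The plan is to obtain Lemma \ref{dieforreal} as a direct composition of Rudelson's lemma (Lemma \ref{rudelson}) with the reconstruction machinery already packaged in Lemma \ref{biglemma2}. The regime $n \log_{(q)}(n) < d_0^{k-1}$ is precisely what Rudelson's lemma needs in order to produce, with probability $1-o(1)$, random Boolean matrices $A_1, \ldots, A_{k-1} \in \{0,1\}^{d_0 \times n}$ whose Hadamard product $A = A_1 \circ \cdots \circ A_{k-1}$ has smallest singular value $\sigma = \Omega(\sqrt{d_0^{k-1}})$ and whose range forms a $(\delta, n, d_0^{k-1})$-Euclidean section for some constant $\delta = \delta(q,k) > 0$; and these are exactly the hypotheses demanded by Lemma \ref{biglemma2}.

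First I would apply Lemma \ref{rudelson} with the given constants $k, q$ and row dimension $d_0$, and then invoke the probabilistic method to fix a specific realization $A_1, \ldots, A_{k-1}$ on which both conclusions (singular value bound and Euclidean section property) simultaneously hold. These fixed matrices determine the ``background'' database $\database_0$ required by Lemma \ref{biglemma2}. Plugging them in yields, verbatim, a database generation algorithm $\alg_2$ with message length $b = \Omega(d_0 \cdot n)$, a distribution $\mu_2$ over $k$-itemsets of $[d_2]$ with $d_2 = k d_0$, a reconstruction algorithm $\mathcal{R}_2$, and a constant $\gamma_2 = \gamma_2(\delta) = \gamma_2(q,k) > 0$, all satisfying the reconstruction guarantee stated there for any noise parameter $\zeta_1 \in o(\sqrt{n}\,\sigma / \sqrt{d_1^{k-1}})$, where $d_1 = (k-1) d_0 + 1$.

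The only remaining check is that the specific threshold $\zeta_1 = \sqrt{n}/\log_{(q+1)}(n)$ named in Lemma \ref{dieforreal} actually lies in $o(\sqrt{n}\,\sigma/\sqrt{d_1^{k-1}})$. Since $k$ is a constant and $d_1 = \Theta(d_0)$, one gets $\sqrt{d_1^{k-1}} = \Theta(\sqrt{d_0^{k-1}})$, and combined with $\sigma = \Omega(\sqrt{d_0^{k-1}})$ this gives $\sqrt{n}\,\sigma/\sqrt{d_1^{k-1}} = \Omega(\sqrt{n})$. Since $\sqrt{n}/\log_{(q+1)}(n) = o(\sqrt{n})$, the required inclusion holds with plenty of room to spare; in fact essentially any sub-$\sqrt{n}$ scaling would do, and the extra iterate in $\log_{(q+1)}$ is chosen only to provide a concrete and robust threshold.

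I do not anticipate a real obstacle: both ingredients have been carefully set up so that this is almost a plug-and-play step. The one piece of bookkeeping to watch is that every constant that appears (namely $\delta$, $\gamma_2$, the hidden constant in $b = \Omega(d_0 n)$, and the implicit constants in the $\Omega$ and $\Theta$ bounds above) depends only on $k$ and $q$, not on $d_0$ or $n$; this is automatic, since $k, q$ are the only parameters of Rudelson's lemma, and Lemma \ref{biglemma2} exposes $\gamma_2$ only as a function of $\delta$. Once this is verified, the conclusion of Lemma \ref{dieforreal} follows.
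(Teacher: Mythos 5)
Your proposal is exactly what the paper does: the paper's proof is a single sentence asserting that Lemma~\ref{dieforreal} follows by combining Lemmas~\ref{rudelson} and~\ref{biglemma2}, and your write-up fills in precisely the natural bookkeeping (fixing a realization of the $A_i$'s via the probabilistic method, verifying $\zeta_1 = \sqrt{n}/\log_{(q+1)}(n) \in o(\sqrt{n}\,\sigma/\sqrt{d_1^{k-1}}) = o(\sqrt{n})$ using $\sigma = \Omega(\sqrt{d_0^{k-1}})$ and $d_1 = \Theta(d_0)$, and checking the constants depend only on $k,q$). One minor note: the paper's statement of Lemma~\ref{rudelson} writes the condition as $d^{k-1} = o(n\log_{(q)}(n))$, which is the reverse of what is needed (and of $d_0^{k-1} > n$ in Lemma~\ref{biglemma}); this is evidently a typo, and you have correctly used the intended direction $n\log_{(q)}(n) = o(d_0^{k-1})$.
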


\medskip \noindent \textbf{Remark:} Note that Lemma \ref{dieforreal} actually holds for any $\zeta_1 = o(\sqrt{n})$; we choose a particular $\zeta_1$ 
that makes the lemma particularly convenient to apply in our context.

\medskip
For any $\eps>0$, suppose we set $n=1/(\eps^2 \cdot \log_{(q)}(n))$ in the statement of Lemma \ref{dieforreal}. This causes $\zeta_1/n$ to equal $1/(\sqrt{n} \cdot \log_{(q+1)}(n)) = \eps \sqrt{\log_{(q)}(n)}/\log_{(q+1)}(n) > \eps$. Hence, we conclude that an \IFE\ sketch that can answer all $k$-itemset frequency queries with error bounded by $\eps$ provides
sufficiently accurate itemset frequency estimates $\hat{f}_\itemset$ to apply Lemma \ref{dieforreal} with $n=1/(\eps^2 \cdot \log_{(q)}(n))$, and Lemma \ref{lemma:de} follows.

\end{proof}

\end{document}